\documentclass[11pt,a4paper]{article}

\usepackage[utf8]{inputenc}
\usepackage[english]{babel}

\usepackage{geometry}
\usepackage{graphicx}
\usepackage{amssymb,amsmath,amsthm}
\usepackage{marvosym}
\usepackage{subcaption}
\usepackage{delarray}
\usepackage{enumitem}
\usepackage{tikz}
\usepackage{hyperref}
\usepackage{ifthen} 
\usepackage{authblk}


\newtheorem{theorem}{\textbf{Theorem}}

\newtheorem{proposition}{\textbf{Proposition}}

\newtheorem{remark}{\textbf{Remark}}

\newtheorem{remark*}{Remark}

\newtheorem{open}{\textbf{Open question}}

\newcommand{\N}{\mathbb{N}}
\newcommand{\Z}{\mathbb{Z}}

\newcommand{\Bool}{\{0,1\}}

\newcommand{\Poly}{{\mathsf{P}}}
\newcommand{\NC}{{\mathsf{NC}}}
\newcommand{\AC}{{\mathsf{AC}}}
\newcommand{\NL}{{\mathsf{NL}}}
\newcommand{\decisionpb}[3]{\fbox{\parbox{0.9\textwidth}{{\bf #1}\\{\it Input:} #2\\{\it Question:} #3}}}
\newcommand{\neighbors}[1]{{\mathcal{N}{({#1})}}}
\newcommand{\indic}{{\bf 1}}


\newcounter{gridcounterx}
\newcounter{gridcountery}

\title{Freezing sandpiles and Boolean threshold networks: equivalence and complexity}
\author[1]{Eric Goles}
\author[1]{Pedro Montealegre}
\author[2,3]{K\'evin Perrot}
\affil[1]{Facultad de Ingenieria y Ciencias, Univ. Adolfo Iba\~{n}ez, Santiago, Chile}
\affil[2]{Aix Marseille Univ., Univ. de Toulon, CNRS, LIS, UMR 7020, Marseille, France}
\affil[3]{Univ. Côte d'Azur, CNRS, I3S, UMR 7271, Sophia Antipolis, France}
\date{}

\begin{document}
\maketitle

\begin{abstract}
  The $\NC$ versus $\Poly$-hard classification of the prediction problem for
  sandpiles on the two dimensional grid with von Neumann neighborhood is a
  famous open problem. In this paper we make two kinds of progresses,
  by studying its freezing variant. First, it enables to establish strong
  connections with other well known prediction problems on networks of threshold
  Boolean functions such as majority. Second, we can highlight some necessary
  and sufficient elements to the dynamical complexity of sandpiles,
  with a surprisingly crucial role of cells with two grains.
\end{abstract}

\section{Introduction}


The sand pile model, as well as the Boolean threshold automata, have been studied and applied extensively in various domains ~\cite{fp19,m97,gm90,gmt13,gm16,gmmo17,gmpt17}.  The classical sandpile model on the two dimensional grid with von Neumann
neighborhood was introduced in the 1980 by Bak, Tang and Wiesenfeld, as a
simple and natural model of some physical phenomena~\cite{btw87}. In~\cite{gm97} Goles and Margenstern showed that in arbitrary graphs, any given Turing machine can be simulated by a configuration of the sandpile model. This means that on an arbitrary topology the dynamic of the sandpile model is Turing-universal. After that, in~\cite{mn99}, Moore and Nilsson started the study of how difficult
it is to predict the behavior of sandpiles, bringing the question to the formal
theory of computational complexity. Sandpile prediction problems are usually
solvable in polynomial time by simply running the simulation until the dynamics
reaches a stable state. Essentially, the results of Moore and Nilsson say that
sandpiles in one dimension are efficiently predictable in parallel (in $\NC$),
and that sandpiles in three dimensions or more are intrinsically sequential
($\Poly$-complete).  It leaves open the two-dimensional case, which has not yet
been closed despite considerable
efforts~\cite{glmmp04,gg06,gmpt17,np18,bmot18}.

Following a trend of research leading to new discoveries around well known open
problems on majority dynamical
systems (reviewed in Section~\ref{s:known}), we introduce in this paper
the {\em freezing} variant of sandpiles, where each site can be fired at most once. Very interestingly, the freezing world
breaks a fundamental barrier between majority and sandpiles. Though it is known
that two-dimensional majority can simulate two-dimensional
sandpiles~\cite{gmpt17}, it is unknown whether the
converse is true. Indeed, the main difficulty lies in the so called {\em
abelian property} of sandpile models (the fact that sand grains may be toppled
in any order), which is absent in the majority rule. This later model
therefore heavily depends on the parallel schedule of cells update
(see~\cite{gm14}), which is not the case in sandpiles and makes a simulation
result hard to establish.  It turns out that the freezing world breaks this
frontier, as majority and other threshold Boolean functions are not
sensitive to the order of cells update in this case. Predicting freezing
variants of dynamical systems may be thought as a ``simplest case'' study of
their complexity (see~\cite[Proposition~6]{fp19} and
Remark~\ref{remark:spp} for a formal discussion).

The complexity classes at stake are $\AC^0$ (constant time
in parallel), $\NL$ (non-deterministic logarithmic space), $\NC$ (poly-logarithmic
time in parallel), and $\Poly$ (polynomial time), with
$\AC^0 \subseteq \NL \subseteq \NC \subseteq \Poly$
(see for example~\cite{ghr95}).

In Section~\ref{s:def} we define the model and problem under consideration and
in Section~\ref{s:known} we review results on the computational complexity of
prediction problems in the freezing world. Section~\ref{s:bool} establishes in
this setting an isomorphism between the dynamics of sandpiles and threshold Boolean
functions on a grid layout. Finally, Section~\ref{s:complexity} studies all
possible restrictions for the freezing sandpile prediction problem, consisting
in allowing only a subset of sand contents in the configuration given as input.
All but two cases are classified as being in $\NC$ or as hard as the general
(freezing) case, which allows to define even simpler sandpile prediction
problems yet preserving the complexity of the general case. The two remaining
cases are discussed at the end of the Section; the difficulty to relate them
with other models brings novel insights on a possible hierarchy of sandpile
prediction problems, between $\NC$ and $\Poly$.

\section{Definitions}
\label{s:def}

We consider the freezing variant of the classical sandpile model introduced by
Bak, Tang and Wiesenfled in~\cite{btw87} on the two dimensional grid with von
Neumann neighborhood. A {\em configuration} $c \in (\N \cup
\{-\infty\})^{\Z^2}$ assigns a number of sand grains to each cell of the grid,
or $-\infty$ when a cell has already fired. For commodity let $c_v$ denote the
sand content at position $v \in \Z^2$ in configuration $c$. When the sand
content of a cell exceeds its number of out-neighbors (four in the grid with
von Neumann neighborhood), then the cell gives one grain to each of its
out-neighbors and enters the state $-\infty$ ({\em freezing state}) so that it
never gives grains again. Formally, with
$\neighbors{(i,j)}=\{(i,j+1),(i+1,j),(i,j-1),(i-1,j)\}$ the dynamics is defined
by $F : (\N \cup \{-\infty\})^{\Z^2} \to (\N \cup \{-\infty\})^{\Z^2}$ such
that for all $v \in \Z^2$,
$$
  F(c)_v=
  \left\{\begin{array}{ll}
    -\infty &\text{ if } c_v \geq 4\\[.5em]
    c_v + \!\!\!\!\sum\limits_{u \in \neighbors{v}} \indic_{\N}(c_u-4) &\text{ otherwise}
  \end{array}\right.
$$
where $\indic_{\N}(x)$ is the indicator function of $\N$, which equals $1$ when
$x \geq 0$, and 0 when $x < 0$, for any $x \in \Z$. Remark that this discrete
dynamical system is deterministic. When a cell gives grains to its neighbors we
say that it {\em fires}, and immediately {\em freezes}.
 
Classically, a configuration $c$ is {\em finite} when the number of non-empty
cells is finite, that is when $|\{v \mid c_v \neq 0\}| < \infty$. Note that up
to translation, all non-empty cells of a finite configuration can always be
placed inside a rectangle of (finite) size $n \times m$ with the bottom left
corner at the origin, hence going from $(0,0)$ to $(n-1,m-1)$. Such rectangular
non-empty parts of finite configurations will be given as inputs to the problem
we consider. For the purpose of this article, since we will restrict the
allowed values on configurations and sometimes forbid the value $0$, we define
a {\em finite} configuration $c$ as having the freezing state outside the
encompassing rectangle, that is with $c_{(v_x,v_y)}=-\infty$ when $v_x < 0$ or
$v_x \geq n$ or $v_y < 0$ or $v_y \geq m$. A configuration $c$ is {\em stable}
when no grain moves, that is when $c_v < 4$ for all cells $v$. Additionally, we
say that a finite configuration $c$ is {\em simple} when for all cell $v$
inside the rectangle of size $n \times m$ we have $c_v \in \{0,1,2,3,4\}$.\\

\decisionpb{Freezing sandpiles prediction problem (FSPP)}{a simple finite
configuration $c$ and a cell $v$.}{does there exist $t$ such that $F^t(c)_v
\geq 4?$}\\

It is straightforward to notice that the problem is solvable in quadratic time
({\bf FSPP} $\in \Poly$) by running the simulation: one step takes a linear
time to be computed, and at least one cell freezes at each step or we have
reached a stable configuration.  Since cells remain frozen, after linearly many
steps the configuration is stable, and we can answer.

Let {\bf Sandpiles prediction problem} ({\bf SPP}) be the analogous prediction
problem on classical (non-freezing) sandpile model, the one not known to be in
$\NC$ nor $\Poly$-hard. We denote $\leq^{m}_{\NC}$ the many-one reduction in
$\NC$, and $\leq^{m}_{\AC^0}$ the many-one reduction in $\AC^0$.

\begin{remark}
  \label{remark:spp}
  As stated in~\cite{np18} (Lemma 1), when there is only one value $4$, and if
  furthermore this value $4$ is placed on the border of the $n \times m$
  rectangle containing the finite configuration, then {\em SPP} is the same as
  {\em FSPP} because each cell is fired at most once. 
\end{remark}

\section{Known results}
\label{s:known}

For the sandpile model~\cite{mn99} and majority cellular automata~\cite{m97},
it has early been proven that the prediction problem is in $\NC$ for
dimension one, and $\Poly$-hard for dimension three and above. Such
$\Poly$-hardness results~\cite{m93,mn97,nw06} are commonly proven via
reductions from the canonical {\em circuit value problem} (CVP) originally
proven to be $\Poly$-complete by Ladner~\cite{l75}, or its monotone variant
(MCVP), or its planar variant (PCVP) (see~\cite{ghr95}). It is remarkable that
all these reductions employ {\em Bank's encoding}~\cite{b71}: dynamical chains
of reactions implement circuit computations on a quiescent background, as
electrons moving along wires. Note that planar monotone circuit value problem
(PMCVP), which is easily reducible to sandpiles and majority prediction
problems (even freezing), has however been proven to be in $\NC$, {\em i.e.}
efficiently computable in parallel~\cite{y91}.

Studies of the freezing world have been introduced in cellular
automata~\cite{got15}, where the authors prove that Turing universality can be
achieved even in one dimension, and that the prediction problem may be
$\Poly$-complete in two dimensions, though in one dimension it is in $\NL$. On
the two-dimensional grid with von Neumann neighborhood, it is proven
in~\cite{bmot18} that at least two state changes are necessary to be
intrinsically universal according to block simulation, and furthermore that two
state changes are sufficient. The paper~\cite{ot19} places freezing cellular
automata universality and prediction in the broader context of bounded-change
and convergent cellular automata.

Previous works on the prediction of threshold (majority-like) functions will be
useful in our analysis of sandpile prediction problems complexity.
It is known that predicting {\em freezing strict majority} is $\Poly$-complete
for undirected graphs of maximum degree at least five ($\Delta(G) \geq 5$), and
in $\NC$ for undirected graphs of maximum degree at most four ($\Delta(G) \leq
4$)~\cite{gmt13}.
Regarding {\em freezing non-strict majority}, its prediction is
$\Poly$-complete for undirected graphs of maximum degree at least four
($\Delta(G) \geq 4$), and in $\NC$ for undirected graphs of maximum degree at
most three ($\Delta(G) \leq 3$)~\cite{gmt13}.
This latter has also been proven to be in $\NC$ for the two-dimensional grid
with von Neumann neighborhood~\cite{gmmo17}.
Remark that, although planarity is known to forbid information crossing on
sandpile models~\cite{gg06,np18}, it is an obstacle that can be overcome on
non-freezing majority~\cite{gm15} (based on a planar traffic light gadget
of degree five, exploiting the non-freeziness).

\section{Sandpiles as a patchwork of threshold Boolean functions}
\label{s:bool}

We begin with a remark linking the dynamics of (freezing) sandpiles to that of
an assembly of Boolean functions. These relations will be useful to
employ the literature in order to prove that some problems are in $\NC$
(Section~\ref{s:complexity}). Indeed, the dynamics of finite freezing sandpiles
can be seen as a grid network of freezing threshold Boolean functions.

Let us define finite freezing Boolean networks on the two dimensional
grid with von Neumann neighborhood. Let $G_{n \times m}=(V_{n \times m},E_{n
\times m})$ be the finite undirected graph defined as the subgraph of the two
dimensional grid induced by vertices in the rectangle of size $n \times m$ with
the bottom left corner at the origin. Formally, 
\begin{align*}
  V_{n \times m} &= \{ (x,y) \in \Z^2 \mid 0 \leq x \leq n-1 \text{ and } 0 \leq y \leq n-1 \}\\
  E_{n \times m} &= \{ (u,v) \in V^2 \mid v \in \neighbors{u} \}
\end{align*}
For simplicity, when the dimensions are clear from the context, we will denote
$G=(V,E)$ such a graph. The set of configurations is $\Bool^V$, and each vertex
$v$ is equipped with a local Boolean function which is freezing (state 1 is
always sent to state 1). We use five such local functions, given a
configuration $c$:
\begin{itemize}
  \item $\wedge$ ({\em and}) defined as
    $f^{\wedge}_v(c)=\left\{\begin{array}{ll}
      1 & \text{ if } c_v=1 \text{ or } \sum\limits_{u \in \neighbors{v}} c_u = 4\\
      0 & \text{ otherwise}
    \end{array}\right.$
  \item $M$ ({\em strict majority}) defined as
    $f^{M}_v(c)=\left\{\begin{array}{ll}
      1 & \text{ if } c_v=1 \text{ or } \sum\limits_{u \in \neighbors{v}} c_u > 2\\
      0 & \text{ otherwise}
    \end{array}\right.$
  \item $m$ ({\em non-strict majority}) defined as
    $f^{m}_v(c)=\left\{\begin{array}{ll}
      1 & \text{ if } c_v=1 \text{ or } \sum\limits_{u \in \neighbors{v}} c_u \geq 2\\
      0 & \text{ otherwise}
    \end{array}\right.$
  \item $\vee$ ({\em or}) defined as
    $f^{\vee}_v(c)=\left\{\begin{array}{ll}
      1 & \text{ if } c_v=1 \text{ or } \sum\limits_{u \in \neighbors{v}} c_u \geq 1\\
      0 & \text{ otherwise}
    \end{array}\right.$
  \item $1$ ({\em constant $1$}) defined as $f^{1}_v(c)=1$
\end{itemize}

Note that each local function only depends on the state of the vertex and its
neighbors, and is invariant by permutation of the neighbors, as is the case in
freezing sandpiles. Also, the formulation of local functions takes into account
the fact that some vertices on the border of the graph $G$ are missing some
neighbors (the number of neighbors, four, is hard-coded in the local
functions). Let $B : \Bool^V \to \Bool^V$ be the dynamics obtained by applying
in parallel the local function assigned to each vertex, {\em i.e.} such that
for all $v \in V$ we have
$$
  B(c)_v = f_v(c).
$$

Given a finite simple sandpile configuration $c$ of size $n \times m$ for the
freezing sandpile model, we define the corresponding freezing threshold Boolean
network $B_c$ of size $n \times m$ (where the local function $f_v$  at $v$
depends on the value of $c_v$) by
\begin{center}
  \begin{tabular}{c|ccccc}
    $c_v$ & 0 & 1 & 2 & 3 & 4\\
    \hline
    $f_v$ & $\wedge$ & $M$ & $m$ & $\vee$ & $1$
  \end{tabular}\\[.2em]
\end{center}
and a configuration on this network as
$$
  \phi(c)_v=\left\{\begin{array}{ll} 1 \text{ if } c_v=-\infty\\ 0 \text{ ortherwise.}\end{array}\right.
$$

Given any finite simple sandpile configuration $c$ of size $n \times m$, we
obtain the freezing Boolean network $B_c$ of size $n \times m$, which dynamics
commutes with the transformation $\phi$ on configurations.

\begin{proposition}\label{prop:bool}
  For any finite simple $c$ and all $t \in \N$ we have $B^t_c(\phi(c)) = \phi(F^t(c))$.
\end{proposition}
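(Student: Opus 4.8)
The plan is to prove the statement by induction on $t$, establishing that the single-step dynamics commute with $\phi$, i.e.\ that $B_c(\phi(c')) = \phi(F(c'))$ for any configuration $c'$ reachable from $c$. Let me sketch this.

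Let me understand the correspondence. In the sandpile, a cell is $-\infty$ iff it has fired (frozen), which corresponds to state $1$ in the Boolean network via $\phi$. A cell with $c_v = 4$ should fire immediately. In the Boolean network, $c_v = 4$ assigns the constant function $f^1_v \equiv 1$, so it immediately goes to $1$. Good.

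Now the key insight: the local Boolean function $f_v$ at a vertex depends on the *original* sandpile value $c_v \in \{0,1,2,3,4\}$ (it's $\wedge, M, m, \vee, 1$ respectively), and the threshold is set so that the cell fires exactly when it has received enough grains from fired neighbors. A cell starting with value $c_v$ fires when its content reaches $\geq 4$, i.e.\ when it has received $\geq 4 - c_v$ grains. Since each fired neighbor gives exactly one grain, the number of grains received equals the number of fired neighbors, which in the Boolean picture is $\sum_{u \in N(v)} \phi(\cdot)_u$ (the number of neighbors in state $1$). So cell $v$ fires iff $\sum_{u \in N(v)} (\text{fired neighbors}) \geq 4 - c_v$:
\begin{itemize}
\item $c_v = 0$: fires iff $4$ neighbors fired $\to$ $\wedge$ (sum $=4$);
\item $c_v = 1$: fires iff $\geq 3$ neighbors $\to$ $M$ (sum $> 2$);
\item $c_v = 2$: fires iff $\geq 2$ neighbors $\to$ $m$ (sum $\geq 2$);
\item $c_v = 3$: fires iff $\geq 1$ neighbor $\to$ $\vee$ (sum $\geq 1$);
\item $c_v = 4$: fires immediately $\to$ $1$.
\end{itemize}
This is exactly the table, so the threshold matching is the content of the proof.

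\begin{proof}
We proceed by induction on $t$. The base case $t=0$ is immediate since $F^0(c)=c$ and $B^0_c(\phi(c))=\phi(c)$. For the inductive step, assume $B^t_c(\phi(c)) = \phi(F^t(c))$; we show that applying one more step preserves the identity, which reduces to proving $B_c(\phi(c')) = \phi(F(c'))$ where $c' = F^t(c)$. We argue vertex by vertex. First observe that once a cell has fired in the sandpile it stays at $-\infty$ (it never gives grains again), and correspondingly $\phi(c')_v = 1$ maps under the freezing local function to $1$; conversely in the Boolean network state $1$ is sent to $1$ by every local function, so frozen/fired cells are handled consistently on both sides. It remains to treat a cell $v$ with $c'_v \neq -\infty$. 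Here the crucial accounting is that in the sandpile the content of $v$ at time $t$ equals its initial content $c_v$ plus the number of neighbors that have already fired (each fired neighbor contributed exactly one grain), and a neighbor $u$ has fired exactly when $\phi(c')_u = 1$. Hence the number of fired neighbors equals $\sum_{u \in N(v)} \phi(c')_u$, and cell $v$ fires at this step iff $c'_v \geq 4$, i.e.\ iff $c_v + \sum_{u \in N(v)} \phi(c')_u \geq 4$, i.e.\ iff $\sum_{u \in N(v)} \phi(c')_u \geq 4 - c_v$. A direct case analysis over $c_v \in \{0,1,2,3,4\}$ shows this firing condition coincides exactly with the threshold condition of the local function $f_v$ assigned by the table: the threshold $4-c_v$ on the neighbor sum translates respectively to $=4$ (function $\wedge$), $>2$ ($M$), $\geq 2$ ($m$), $\geq 1$ ($\vee$), and the trivially satisfied condition for $c_v=4$ (constant $1$). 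Therefore $F(c')_v = -\infty$ iff $f_v(\phi(c')) = 1$, which is precisely $\phi(F(c'))_v = B_c(\phi(c'))_v$, completing the induction.
\end{proof}

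The main subtlety to handle carefully is the bookkeeping that a cell's accumulated grain count equals $c_v$ plus its number of fired neighbors: this relies on the freezing property ensuring each neighbor fires at most once and thus contributes at most one grain, so that the sandpile's additive grain count is faithfully tracked by the Boolean indicator sum $\sum_{u \in N(v)} \phi(c')_u$. I expect the boundary cells (those on the border of the $n \times m$ rectangle, with fewer than four neighbors) to require a brief separate check, but the convention that cells outside the rectangle are frozen at $-\infty$ (contributing $1$ to the Boolean sum) together with the hard-coded neighbor count in the local functions makes the border cells obey the same uniform firing rule, so no special treatment beyond this observation is needed.
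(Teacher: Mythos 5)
Your interior argument is correct and is essentially the paper's own proof in expanded form: induction on $t$, the accounting that a non-frozen cell's content at time $t$ equals $c_v$ plus its number of already-fired neighbors, and the case analysis matching the firing threshold $4-c_v$ against the table $\wedge, M, m, \vee, 1$. This is exactly the paper's (terser) observation that in both dynamics a cell enters the freezing state if and only if the same number of its neighbors are frozen.

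However, your boundary treatment contains a genuine error, stated in your closing remark and implicitly relied on in the inductive step. You assert that cells outside the $n \times m$ rectangle, being frozen at $-\infty$, contribute $1$ to the Boolean neighbor sums. They do not, and they must not. By definition the Boolean network $B_c$ lives on the graph $G_{n \times m}$ only, so outside cells are not vertices at all and appear in no neighborhood sum (the paper's proof opens precisely with this point: initially frozen cells ``are discarded in the corresponding freezing Boolean network''). Symmetrically, in the sandpile a cell at $-\infty$ never satisfies $c_u \geq 4$, hence $\indic_{\N}(c_u-4)=0$ forever: an initially frozen cell never delivers a grain. If outside cells counted as $1$ in the Boolean sums, the proposition would be false: take $n=m=1$ and $c_{(0,0)}=0$; the sandpile cell receives no grain and never fires, so $\phi(F^t(c))_{(0,0)}=0$ for all $t$, whereas under your convention the $\wedge$-cell would see a neighbor sum of $4$ and switch to $1$ at the first step. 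The same confusion makes your claim that ``a neighbor $u$ has fired exactly when $\phi(c')_u=1$'' false at the border: an outside neighbor has $\phi(c')_u=1$ but has never fired and has contributed no grain. The correct observation is the reverse of yours: the correspondence holds at the border because frozen outside cells contribute \emph{nothing} on either side (no grain in the sandpile, no vertex in the Boolean graph), and the hard-coded absolute thresholds in the local functions then demand the same number of fired inside-neighbors in both dynamics --- for instance, a border cell with $c_v=0$ can fire in neither.
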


\begin{proof}
  Note that initially freezed sandpile cells (outside the rectangle of size $n
  \times m$) are discarded in the corresponding freezing Boolean network.
  Starting from the initial configuration with all cells in Boolean state $0$,
  this latter will therefore simply transform a grain move from $u$ to $v$ into
  the fact that $u$ is a neighbor of $v$ in state $1$. At each time step, a
  cell in the freezing state remains in the freezing state in both dynamics.
  Regarding other cells, in both dynamics and at each step, they enter the
  freezing state if and only if at least the same number (in both dynamics) of
  their neighbors are in the freezing state. And $\phi$ depends only on the
  freezing (or not) state of each cell.
\end{proof}

\section{Computational complexity of {\bf FSPP}}
\label{s:complexity}

We study the computational complexity of restrictions on {\bf FSPP}, depending
on the sand contents that each cell of the simple finite configuration given as
input can take among $\{0,1,2,3,4\}$. It is obvious that forbidding the value
$4$ leads to answering {\em no} to any prediction question, and allowing only
the values $3$ and $4$ (or just $4$) leads to always answering {\em yes},
therefore we consider only the 14 remaining cases. For any $A \subseteq
\{0,1,2,3,4\}$ we say that a configuration $c$ is {\em $A$-simple} when for all
cell $v$ we have $c_v \in A$. With this notation, simple means
$\{0,1,2,3,4\}$-simple.\\

\decisionpb{$A$-freezing sandpiles prediction problem ($A$-FSPP)}{an $A$-simple
finite configuration $c$ and a cell $v$.}{does there exist $t$ such that
$F^t(c)_v \geq 4?$}\\

Let us underline that the restriction to $A \subseteq \{0,1,2,3,4\}$ is sound,
as follows.

\begin{proposition}
  If we generalize the definition of $A$-simple configuration, then we still
  have $A${\bf -FSPP} $\leq^m_{\AC^0}$ {\bf FSPP} for any finite $A \subseteq
  \N$.
\end{proposition}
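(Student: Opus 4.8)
The plan is to use the simplest possible reduction, namely clamping. Given an instance $(c,v)$ of $A${\bf -FSPP} with $A \subseteq \N$ finite, I would map it to the instance $(c',v)$ of {\bf FSPP}, where $c'$ is obtained from $c$ by replacing each cell content $c_w$ by $\min(c_w,4)$; that is, every value $\geq 4$ is rewritten to $4$ while every value in $\{0,1,2,3\}$ is left untouched. The query cell $v$ and the dimensions $n \times m$ are unchanged. Since $c'_w \in \{0,1,2,3,4\}$ for every cell $w$, the image $c'$ is a simple finite configuration, hence a valid {\bf FSPP} input.

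The reduction is computable in $\AC^0$: each output cell depends only on the corresponding input cell, and since $A$ is finite the content of every cell is encoded on a constant number of bits, so computing $\min(\cdot,4)$ cell by cell is a constant-depth, polynomial-size (essentially projection-like) operation, with the query cell simply copied. Thus $A${\bf -FSPP} $\leq^m_{\AC^0}$ {\bf FSPP} will follow once correctness is established.

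For correctness, the key observation is that in the freezing dynamics the exact content of a cell matters only through whether it lies below the firing threshold $4$ or not. The contribution of a neighbor $u$ to $v$ is the indicator $\indic_{\N}(c_u-4)$, and clamping at $4$ leaves this indicator unchanged because $c_u \geq 4 \iff \min(c_u,4) \geq 4$. I would first check the single-step identity $F(c)=F(c')$: on cells with $c_w \geq 4$ both sides equal $-\infty$, and on cells with $c_w < 4$ we have $c_w = c'_w$ together with equal received grain sums by the indicator invariance just noted. Since $F$ is deterministic, $F(c)=F(c')$ immediately yields $F^t(c)=F^t(c')$ for all $t \geq 1$. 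At $t=0$ the two configurations may differ on cells of large content, but there $c_w \geq 4 \iff c'_w \geq 4$, so the firing predicate is already preserved. Hence $F^t(c)_v \geq 4 \iff F^t(c')_v \geq 4$ for every $t$, and the two instances have the same answer.

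The only point requiring care, a mild one, is to argue this equivalence at all times rather than merely on the input configuration; this is exactly what the one-step identity plus determinism delivers. I would also remark that transient overshoots (a cell reaching value $5$, $6$ or $7$ during the evolution, for instance a cell of content $3$ receiving grains from two simultaneously firing neighbors) pose no difficulty, since clamping is applied only to the input and the dynamics of $c$ and $c'$ coincide from the first step onward.
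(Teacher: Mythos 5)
Your proposal is correct and is essentially the paper's own proof: the same clamping map $c_w \mapsto \min\{c_w,4\}$, justified by the same one-step identity $F(c)=F(c')$ together with preservation of the firing predicate at time $0$. You simply spell out the details (the indicator invariance, $\AC^0$ computability, and the harmless transient overshoots) that the paper leaves implicit.
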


\begin{proof}
  Let $(c,v)$ be an instance of $A${\bf -FSPP}. Since the model is freezing,
  the cell to cell transformation of $c$ into $c'$ defined as $c_v \mapsto
  \min\{c_v,4\}$ (note that the outer part of the finite rectangle is not
  modified)
  preserves the answer (after one step we have $F(c)=F(c')$), and $(c',v)$ is
  an instance of {\bf FSPP}.
\end{proof}

Considering $A$ among
\begin{center}
  $\{0,4\},\{1,4\},\{0,1,4\},\{2,4\},\{0,2,4\},\{1,2,4\},\{0,1,2,4\},$
  $\{0,3,4\},\{1,3,4\},\{0,1,3,4\},\{2,3,4\},\{0,2,3,4\},\{1,2,3,4\},\{0,1,2,3,4\},$
\end{center}
the results are summed-up in Theorems~\ref{theorem:nc} and~\ref{theorem:fspp},
plus the Open question~\ref{question:0134}.

\begin{theorem}\label{theorem:nc}
  {\bf $A$-FSPP} $\in \NC$ when $A$ is one of
  $$\{0,4\},\{1,4\},\{0,1,4\},\{2,4\},\{0,3,4\},\{2,3,4\}.$$
\end{theorem}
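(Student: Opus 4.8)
The plan is to prove each of the six cases is in $\NC$ by exploiting Proposition~\ref{prop:bool}, which lets me translate the sandpile prediction question into an equivalent question about a freezing Boolean network whose local functions are determined cell-by-cell by the initial sand content. Since cell $v$ eventually fires in the sandpile if and only if vertex $v$ eventually reaches state $1$ in $B_c$, it suffices to show that predicting the stabilization of $B_c$ is in $\NC$ for each of these six restricted input alphabets. The strategy throughout is to identify, for each alphabet $A$, what collection of local functions appears on the grid, and then either reduce the prediction to a problem already known to be in $\NC$ from Section~\ref{s:known}, or to show the dynamics is so constrained that the set of cells that ever fire can be computed directly.

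First I would handle the ``easy'' alphabets where the dynamics collapses. For $A=\{0,4\}$ every non-$4$ cell carries the $\wedge$ function, so a cell fires only once all four of its (present) neighbors have fired; combined with the initial $4$'s this is a very rigid monotone spreading process, and I expect the set of eventually-firing cells to be characterizable by a simple reachability/closure condition computable in $\NC$ (indeed likely in $\NL$). The alphabet $A=\{1,4\}$ is degenerate in the opposite direction: a cell with content $1$ carries the strict-majority-with-seed function $M$, but since $c_v=1$ it is already ``on'' as a source in the Boolean picture, which should make the prediction essentially trivial. The cases $A=\{0,1,4\}$ and $A=\{0,3,4\}$ mix $\wedge$ (and possibly $\vee$) with seeds and the constant-$1$ cells; here I would argue the process is still a monotone front whose final extent is given by a closure operation on the grid that $\NC$ can evaluate via parallel reachability, possibly after reducing to (planar, monotone) reachability of the type underlying the $\NC$ result for PMCVP cited in Section~\ref{s:known}.

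The genuinely substantive cases are $A=\{2,4\}$ and $A=\{2,3,4\}$, where the $m$ (non-strict majority) function enters. For $A=\{2,4\}$ every non-source cell computes non-strict majority, so the Boolean network $B_c$ is exactly a freezing non-strict-majority dynamics on (a subgraph of) the two-dimensional grid with von~Neumann neighborhood — and prediction of freezing non-strict majority on this grid is known to be in $\NC$ by the result of~\cite{gmmo17} quoted in Section~\ref{s:known}. The plan is therefore to invoke Proposition~\ref{prop:bool} to convert $\{2,4\}${\bf -FSPP} into this grid majority prediction problem and cite~\cite{gmmo17} directly, being careful to check that the $4$-cells act as legitimate fixed sources (constant $1$) and that missing boundary neighbors are handled consistently with the local-function convention fixed in Section~\ref{s:bool}. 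For $A=\{2,3,4\}$ the extra ingredient is the $\vee$ function on $3$-cells; since $\vee$ fires under a strictly weaker condition than $m$, one can view the $3$-cells as majority cells that are only more eager to fire, and I would either fold them into the same grid-majority framework by a monotonicity argument or show that the $\vee$-cells can be pre-processed away, again landing in the $\NC$ result of~\cite{gmmo17}.

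I expect the main obstacle to be the last two cases, specifically ensuring that the reduction to freezing non-strict majority on the grid is faithful in the presence of the non-uniform seeding by the $4$-cells and of the boundary, where the hard-coded degree four in the local functions must be reconciled with vertices that genuinely have fewer than four neighbors. The $\vee$-cells in $A=\{2,3,4\}$ are the subtler point: mixing two different threshold functions on the same grid is not literally the setting of~\cite{gmmo17}, so the careful step is to argue (by a monotone comparison of firing times, or by showing $\vee$-cells fire at most one step ``ahead'' and can be simulated within a majority grid without altering reachability) that the combined dynamics still falls under a problem known to be in $\NC$. The remaining bookkeeping for the collapsing alphabets I expect to be routine once the right closure/reachability characterization is written down.
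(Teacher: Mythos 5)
There is a genuine gap, and it stems from a misreading of the correspondence in Proposition~\ref{prop:bool}. Under $\phi$, a vertex is in Boolean state $1$ if and only if its sand content is $-\infty$ (already fired); a cell with sand content $1$ starts in Boolean state $0$ and carries the \emph{strict majority} function $M$ (the clause ``$c_v=1$'' in the definition of $f^M$ refers to the Boolean state, not to the sand content). You instead treat sand-$1$ cells as seeds that are ``already on,'' and this breaks two of your six cases. First, $\{1,4\}$\textbf{-FSPP} is not ``essentially trivial'': it is exactly the prediction problem for freezing strict majority on the grid with sources at the $4$-cells, and the paper needs the nontrivial $\NC^2$ algorithm of~\cite{gmt13} for graphs of maximum degree four to settle it. Second, $\{0,1,4\}$\textbf{-FSPP} mixes $\wedge$-cells with strict-majority cells, so no closure/reachability argument of the kind you sketch can suffice — a cell needing $3$ of its $4$ neighbors is not a reachability process. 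The paper handles this case by a graph gadget (each $0$-cell is replaced by a $4$-cycle of auxiliary vertices wired to its four neighbors, keeping maximum degree four and keeping those vertices permanently passive), after which~\cite{gmt13} applies, together with a separate case analysis when the queried cell itself is a $0$-cell.

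For $\{2,3,4\}$ you correctly identify that mixing $\vee$ with non-strict majority is not literally the setting of~\cite{gmmo17}, but your proposed fixes (a monotone comparison of firing times, or pre-processing the $\vee$-cells away) are not substantiated and are unlikely to work as stated: a $3$-cell fires as soon as \emph{any one} neighbor fires, which no non-strict majority cell on the grid can mimic locally. The paper's actual route is a cell-to-macrocell reduction from $\{2,3,4\}$\textbf{-FSPP} to $\{2,4\}$\textbf{-FSPP}: each cell becomes a $5\times 6$ block, a $3$-cell becoming a block of $2$'s seeded with $4$'s so that after one step the relevant cells hold three grains, with correctness argued via the abelian property of sandpiles; only then is~\cite{gmmo17} invoked. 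Your remaining cases are essentially sound and match the paper: $\{0,4\}$ does collapse (the paper even gets $\AC^0$, since a $0$-cell fires iff all four of its neighbors are $4$'s), $\{2,4\}$ is a direct application of~\cite{gmmo17} as you say, and $\{0,3,4\}$ is indeed a reachability computation (the paper places it in $\NL$), with the same caveat that adjacent $0$-cells block each other and the queried-cell-is-$0$ case needs separate treatment.
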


\begin{theorem}\label{theorem:fspp}
  {\bf FSPP} $\leq^{m}_{\AC^0}$ {\bf $A$-FSPP} when $A$ is one of
  $$\{0,2,4\},\{1,2,4\},\{0,1,2,4\},\{0,2,3,4\},\{1,2,3,4\}\text{ and }\{0,1,2,3,4\}.$$
\end{theorem}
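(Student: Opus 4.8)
The plan is to prove Theorem~\ref{theorem:fspp} by exhibiting, for each listed set $A$, an $\AC^0$ many-one reduction from the general {\bf FSPP} to {\bf $A$-FSPP}. Since all six sets share the common feature of containing $\{2,4\}$, the core idea is to show that the value $2$ is expressive enough to simulate all the other sand contents $\{0,1,3\}$ that may appear in an arbitrary simple configuration. Concretely, given an instance $(c,v)$ of {\bf FSPP}, I would build a larger $A$-simple configuration $(c',v')$ by replacing each cell of $c$ with a small fixed-size gadget block assembled only from cells with values in $A$, in such a way that the predicate ``$v$ ever reaches value $4$ in $c$'' is preserved as ``$v'$ ever reaches value $4$ in $c'$''. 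Because each cell is replaced independently by a gadget of bounded size whose contents depend only on the local value $c_u \in \{0,1,2,3,4\}$, the whole transformation is a simple local rewriting that is computable in $\AC^0$.

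\emph{First} I would set up the gadget framework: fix a blow-up factor $k$ (a small constant) so that cell $(i,j)$ of $c$ maps to a $k \times k$ block of cells in $c'$, with the pointer cell $v'$ being the image of $v$ inside its block. The key is to design, for each target value $\alpha \in \{0,1,2,3,4\}$, a gadget built from $A$-allowed values whose firing behavior emulates a single cell holding $\alpha$: it should fire (and thereby send exactly one grain into each of its four neighboring gadgets) precisely when the original cell would fire, and only after receiving the appropriate number of incoming grains. \emph{Then} I would verify the simulation invariant: a gadget for value $\alpha$ must require exactly $4-\alpha$ incoming grains from distinct neighbor directions before it topples and emits to all four neighbors, thereby reproducing the threshold behavior $c_v \geq 4$ of the original dynamics under the transport of grains. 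Using Proposition~\ref{prop:bool}, I can reason about this equivalently in the Boolean-network picture, where the value-$\alpha$ cell corresponds to the local function $f_v \in \{\wedge, M, m, \vee, 1\}$, and the task becomes building each such threshold function out of the functions available in $A$ (namely from the value-$2$ function $m$ and value-$4$ function $1$, plus whichever of $\wedge, M, \vee$ are present).

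The main obstacle I expect is the \textbf{directional / geometric faithfulness} of the gadgets. Unlike an abstract Boolean circuit, the grid sandpile is constrained: each gadget can only communicate with its four grid-neighbors, all edges are undirected and symmetric, and grains move isotropically. So the difficulty is to encode a desired threshold (e.g.\ an \emph{and} gate, $c_v=0$, needing all four neighbors) using only value-$2$ (\emph{non-strict majority}, threshold~$2$) and value-$4$ (constant firing) cells, while respecting planarity and controlling \emph{spurious} firings — ensuring no gadget topples prematurely from grains arriving on only some of its sides, and that grain flow between adjacent gadgets is one-directional in effect even though the grid is symmetric. I anticipate needing auxiliary ``wire'' and ``diode'' sub-gadgets assembled from $2$'s to route and delay grains, and a careful case analysis showing that the sets containing $\{1\}$ or $\{3\}$ (such as $\{1,2,4\}$ or $\{1,2,3,4\}$) only make the construction easier, so the hardest case to nail down is the minimal set $\{0,2,4\}$, from which the others follow by simply leaving the extra allowed values unused.

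\emph{Finally}, I would confirm that the pointer $v'$ reaches $4$ if and only if $v$ does in the original dynamics, that $c'$ is genuinely $A$-simple (its outer frozen border is respected by placing the gadget array with the correct $-\infty$ surround), and that the map $(c,v) \mapsto (c',v')$ is $\AC^0$-computable because it is a uniform constant-size local substitution with a trivially computable new pointer coordinate. The six listed cases then all reduce to presenting the $\{0,2,4\}$ gadget set and observing that each larger $A$ in the list contains $\{0,2,4\}$ or at least $\{2,4\}$ together with enough extra values to realize any still-missing gadget even more directly.
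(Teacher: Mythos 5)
Your high-level architecture is exactly the paper's: a constant-size cell-to-macrocell substitution computable in $\AC^0$, where the gadget replacing a cell of value $a$ fires precisely when $4-a$ of its neighbouring gadgets have fired, with correctness argued through the abelian property. But the proposal stops where the actual proof begins: not a single gadget is constructed, and the one enabling trick that makes wires possible in the sets lacking the value $3$ is never identified. A line of $2$'s cannot propagate a signal (each $2$ needs \emph{two} incoming grains), so the paper's constructions for $\{0,1,2,4\}$, $\{1,2,4\}$ and $\{0,2,4\}$ place a $4$ next to each wire cell $2$: the $4$ fires at the first step, turning that $2$ into an effective $3$ (threshold one), and only then can lines of such cells carry signals between macrocell centers. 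Without this observation your ``wire and diode sub-gadgets assembled from $2$'s'' have no realization. Note also that diodes are not needed at all: the undirected wires faithfully reproduce the symmetric sandpile dynamics (a wire frozen by a signal from one side corresponds to an edge whose both endpoints have already fired), and diode gadgets appear in the paper only in the discussion of the open $\{0,1,3,4\}$ case. You likewise omit that the questioned cell needs its own special macrocell, a detail the paper treats separately in every reduction.

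More seriously, your reduction ordering is flawed. You claim the hardest case is $\{0,2,4\}$ and that ``the others follow by simply leaving the extra allowed values unused.'' That inclusion argument is valid only for supersets of $\{0,2,4\}$, namely $\{0,1,2,4\}$, $\{0,2,3,4\}$ and $\{0,1,2,3,4\}$; it fails for $\{1,2,4\}$ and $\{1,2,3,4\}$, which \emph{forbid} the value $0$, so a $\{0,2,4\}$-simple configuration is not a legal instance of those problems. For these two sets the passive background must consist of $1$'s, and $1$'s are not inert (a $1$ fires once three of its neighbours fire); the whole construction must therefore be redesigned and re-verified, which is not ``even more direct.'' The paper does this with dedicated macrocells (a $5 \times 5$ macrocell for $\{1,2,3,4\}$, and a $5 \times 7$ reduction from $\{0,1,2,4\}$\textbf{-FSPP} to $\{1,2,4\}$\textbf{-FSPP}), explicitly arguing that spurious firings of background $1$'s have no side effect. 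In short, the paper's chain is: direct reductions from \textbf{FSPP} to $\{1,2,3,4\}$, $\{0,2,3,4\}$ and $\{0,1,2,4\}$, then compositions $\{0,1,2,4\} \to \{1,2,4\}$ and $\{0,2,3,4\} \to \{0,2,4\}$; your single-base-case plan cannot cover the list as stated, and the missing gadget designs are the substance of the theorem, not a routine verification.
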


\begin{open}
  \label{question:0134}
  {\bf $\{0,1,3,4\}$-FSPP} $\leq^{m}_{\AC^0}$ {\bf $\{1,3,4\}$-FSPP},
  but does {\bf $\{1,3,4\}$-FSPP} $\in \NC$
  or {\bf FSPP} $\leq^{m}_{\AC^0}$ {\bf $\{0,1,3,4\}$-FSPP}?
\end{open}

Subsections~\ref{ss:nc} and~\ref{ss:fspp} will present respectively the results
of Theorems~\ref{theorem:nc} and~\ref{theorem:fspp}.  Subsection~\ref{ss:0134}
will present some perspectives on Open question~\ref{question:0134}.

\subsection{Restrictions efficiently predictable in parallel}
\label{ss:nc}

This section makes use of the developments presented in Section~\ref{s:bool},
in order to apply results from the literature on problems in $\NC$.

\subsubsection{$\{0,4\}${\bf -FSPP}}

When {\bf FSPP} is restricted to $\{0,4\}$-simple configurations, according to
Proposition~\ref{prop:bool} it corresponds to a finite freezing Boolean network
on the grid with only {\em and} and {\em constant $1$} local functions, which
can be decided in constant parallel time: the instance $(c,v)$ is positive if
and only if $\phi(c)_v=1$ or $\sum_{u \in \neighbors{v}} \phi(c)_u = 4$ (with
$|\neighbors{v}| \leq 4$).

\begin{proposition}\label{prop:04}
  $\{0,4\}${\bf -FSPP} $\in \AC^0$.
\end{proposition}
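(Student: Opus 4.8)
The plan is to reduce the question to the freezing Boolean network of Proposition~\ref{prop:bool} and then show that whether a cell ever fires is a purely local predicate on the input, decidable in $\AC^0$. On a $\{0,4\}$-simple configuration $c$, the associated network $B_c$ carries only the local functions $\wedge$ (on cells with $c_v=0$) and the constant $1$ (on cells with $c_v=4$), and by Proposition~\ref{prop:bool} the instance $(c,v)$ is positive exactly when $v$ eventually reaches state $1$ under $B_c$ starting from $\phi(c)$. Since $c$ is simple, $\phi(c)$ is the all-zero configuration, so at time $1$ the cells in state $1$ are precisely the value-$4$ cells (the constant functions fire immediately and, being freezing, stay at $1$ forever).

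The heart of the proof is the claim that a value-$0$ cell $v$ ever reaches state $1$ if and only if all four of its neighbors have value $4$. One direction is immediate: if every $u \in \neighbors{v}$ has $c_u=4$, then all neighbors are $1$ at time $1$, so the $\wedge$-function at $v$ outputs $1$ at time $2$. For the converse I would argue by minimality. Suppose some value-$0$ cell reaches state $1$, and let $t \geq 2$ be the first time at which any value-$0$ cell fires; fix a value-$0$ cell $v$ firing at time $t$. Then all four neighbors of $v$ are in state $1$ at time $t-1$. If one neighbor $u$ were itself a value-$0$ cell, then $u$ would have reached state $1$ at some time $\leq t-1 < t$, contradicting the minimality of $t$. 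Hence every neighbor of $v$ is a value-$4$ cell, which is what we wanted.

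This characterization shows that $(c,v)$ is positive if and only if $c_v=4$, or $c_v=0$ and $c_u=4$ for all $u \in \neighbors{v}$ — matching the condition $\phi(c)_v=1$ or $\sum_{u \in \neighbors{v}} \phi(c)_u = 4$ once value-$4$ cells are read as state $1$. The predicate inspects only the cell $v$ and its at most four neighbors, so the final step is to observe that, given the input rectangle and the coordinates of $v$, reading a constant number of designated cells and combining them by a bounded conjunction and disjunction (using $|\neighbors{v}| \leq 4$) is computable by a constant-depth, polynomial-size circuit, hence in $\AC^0$. I expect the only genuinely delicate point to be the converse of the characterization: ruling out a propagating chain of value-$0$ cells successively firing over many steps. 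The freezing property is what makes the minimality argument go through, since each value-$0$ cell that fires would have to be preceded by an even earlier value-$0$ firing, which is impossible.
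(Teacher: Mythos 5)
Your overall route is the same as the paper's: invoke Proposition~\ref{prop:bool} to view a $\{0,4\}$-simple configuration as a freezing network of $\wedge$ and constant-$1$ functions, characterize positive instances by the purely local condition ``$c_v=4$, or $c_v=0$ and all four neighbors carry a $4$'', and observe that this is a bounded fan-in, constant-depth check on the input. The characterization is correct, and it is exactly what the paper asserts (the paper states it in one line and gives no more detailed justification than you attempt to).

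There is, however, a gap in your proof of the hard direction of that characterization. You let $t$ be the first time at which \emph{any} value-$0$ cell fires and fix a value-$0$ cell firing at time $t$; the minimality argument then shows that \emph{this earliest-firing} cell has four value-$4$ neighbors. But the statement you need concerns an \emph{arbitrary} value-$0$ cell that fires --- in particular the questioned cell, which need not fire at the minimal time; note that you silently reuse the name $v$ for the earliest-firing cell, which hides the mismatch when you conclude ``which is what we wanted''. As written, your argument does not exclude, for instance, a value-$0$ cell firing at time $3$ with three value-$4$ neighbors and one value-$0$ neighbor that fired at time $2$; excluding that requires applying the reasoning again to the neighbor, i.e.\ an induction on firing time, of which your paragraph is only the base case. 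The cleanest repair is the pairwise circular-dependency argument: if a value-$0$ cell $v$ fires at time $t_v$ and has a value-$0$ neighbor $u$, then $u$ must fire at some $t_u<t_v$ (every neighbor of $v$ fires strictly before $v$), but symmetrically $u$ needs $v$ to have fired strictly before $t_u$, contradicting that in the freezing dynamics a cell fires exactly once. Alternatively, take the minimum firing time over the set of value-$0$ cells \emph{having a value-$0$ neighbor}, and observe that this set is closed under passing to that neighbor, so minimality yields a contradiction. With either one-line fix your proof is complete and matches the paper's (essentially unwritten) justification.
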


\subsubsection{$\{1,4\}${\bf -FSPP} and $\{0,1,4\}${\bf -FSPP}}
\label{ss:014}

When {\bf FSPP} is restricted to $\{1,4\}$-simple configurations, according to
Proposition~\ref{prop:bool} it corresponds to a finite freezing Boolean network
on the grid with only {\em strict majority} and {\em constant $1$} local
functions. However, {\em constant $1$} local functions are the same as {\em
strict majority} cells initially in state $1$ since we are in a freezing world.
As a consequence, we are left with only {\em strict majority} local functions
on a grid with von Neumann neighborhood, which can be predicted in $\NC^2$
according to~\cite{gmt13} (to adapt the setting it is sufficient to add a
border of cells in state $0$). The transformation is easily performed in
$\AC^0$, leading to an overall algorithm in $\NC^2$. See Figure~\ref{fig:14}
for an illustration.

\begin{figure}
  \centerline{\begin{tikzpicture}[scale=.5]
  \draw[black!30] (0,0) grid (7,5);
  \foreach \x in {0,...,6}{
    \node[black!30] at (\x+.5,.5) {\tiny $-\infty$};
    \node[black!30] at (\x+.5,4.5) {\tiny $-\infty$};
  }
  \foreach \y in {1,2,3}{
    \node[black!30] at (.5,\y+.5) {\tiny $-\infty$};
    \node[black!30] at (6.5,\y+.5) {\tiny $-\infty$};
  }
  \draw (1,1) grid (6,4);
  \setcounter{gridcounterx}{1} 
  \setcounter{gridcountery}{1} 
  \foreach \v in {1,4,4,4,1,4,1,1,1,1,4,4,4,4,4}{
    \node at (\value{gridcounterx}+.5,\value{gridcountery}+.5) {$\v$};
    \addtocounter{gridcounterx}{1}
    \ifthenelse{\value{gridcounterx}>5} 
    {
      \setcounter{gridcounterx}{1} 
      \addtocounter{gridcountery}{1}
    }{}
  }
  \node at (8,2.5) {$\mapsto$};
  \draw (9,0) grid (16,5);
  \setcounter{gridcounterx}{9} 
  \setcounter{gridcountery}{0} 
  \foreach \v in {0,0,0,0,0,0,0,0,0,1,1,1,0,0,0,1,0,0,0,0,0,0,1,1,1,1,1,0,0,0,0,0,0,0,0}{
    \node at (\value{gridcounterx}+.5,\value{gridcountery}+.5) {$\v$};
    \addtocounter{gridcounterx}{1}
    \ifthenelse{\value{gridcounterx}>15} 
    {
      \setcounter{gridcounterx}{9} 
      \addtocounter{gridcountery}{1}
    }{}
  }
\end{tikzpicture}}
  \caption{Transformation in $\AC^0$ of a $\{1,4\}$-simple sandpile
  configuration to a configuration for the {\em freezing strict majority}
  dynamics on the grid~\cite{gmt13}.}
  \label{fig:14}
\end{figure}
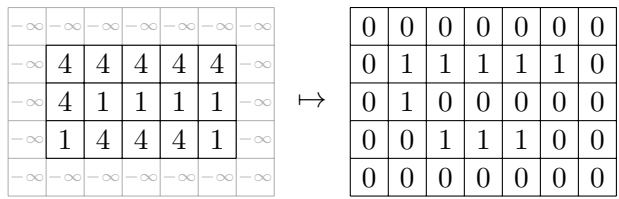

\begin{proposition}\label{prop:14}
  $\{1,4\}${\bf -FSPP} $\in \NC^2$.
\end{proposition}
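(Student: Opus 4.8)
The plan is to translate the problem into the Boolean network formalism of Section~\ref{s:bool} and then reduce it to freezing strict majority on a bounded-degree graph, for which \cite{gmt13} supplies an $\NC^2$ decision procedure. First I would apply Proposition~\ref{prop:bool}: since the input is $\{1,4\}$-simple, the table assigning a local function to each cell produces only the strict majority function $f^M_v$ (at cells where $c_v = 1$) and the constant function $f^1_v$ (at cells where $c_v = 4$). Under the commuting map $\phi$, the prediction question ``does there exist $t$ with $F^t(c)_v \geq 4$?'' becomes ``does the Boolean state of $v$ ever reach $1$ in $B_c$?''.

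The conceptual heart of the argument is to eliminate the constant-$1$ cells. Because the network is freezing, a cell that always outputs $1$ is interchangeable with a strict majority cell whose initial Boolean state is preset to $1$: both hold the value $1$ at every step $t \geq 1$, and a freezing cell never leaves state $1$ once entered. I would make this precise by monotonicity --- both dynamics are non-decreasing, so presetting the constant cells to $1$ only advances their neighbors, and from step $1$ onward the two systems obey the identical rule $f^M_v$ and reach the same fixed point. Hence the set of cells that \emph{ever} reach $1$ is unchanged, and the ``does there exist $t$'' answer is preserved. After this substitution every vertex runs $f^M_v$, so the dynamics is pure freezing strict majority on $G_{n \times m}$ with some vertices preset to $1$.

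Next I would invoke \cite{gmt13}: freezing strict majority prediction lies in $\NC^2$ on undirected graphs of maximum degree at most four. The grid with von Neumann neighborhood meets this bound exactly, so the result applies. To match the precise setting of \cite{gmt13} I would surround the rectangle by an extra border of cells in state $0$; this leaves the strict-majority threshold of every interior cell unchanged while giving a uniform neighborhood structure. Reading $c_v$, choosing the preset initial states, and adding the $0$-border together form a cell-to-cell relabelling computable in $\AC^0$, so composing it with the $\NC^2$ oracle gives an overall $\NC^2$ algorithm.

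The main obstacle is the bookkeeping around three boundary conventions that must be reconciled simultaneously: the hard-coded count of four neighbors built into $f^M_v$ for cells on the border of $G$, the outer frozen cells of the sandpile that $\phi$ discards, and the input format expected by \cite{gmt13}. I expect the care to lie in verifying that the constant-$1$ to preset-majority replacement and the added $0$-border together leave every interior threshold comparison intact, so that the oracle's answer coincides with the original prediction; once that is checked, the complexity bound follows immediately from the composition.
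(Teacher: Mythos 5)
Your proposal is correct and follows essentially the same route as the paper: apply Proposition~\ref{prop:bool} to obtain a freezing Boolean network with only \emph{strict majority} and \emph{constant $1$} local functions, replace the constant-$1$ cells by strict majority cells preset to state $1$ (valid precisely because the dynamics is freezing), and then invoke the $\NC^2$ algorithm of~\cite{gmt13} for freezing strict majority on graphs of maximum degree at most four, adding a border of state-$0$ cells to match that setting, with the whole transformation in $\AC^0$. Your explicit monotonicity/one-step-shift justification of the constant-to-preset substitution is a welcome elaboration of what the paper states in one sentence, but it is the same argument.
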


The result of~\cite{gmt13} can also be applied to prove that $\{0,1,4\}${\bf
-FSPP} is in $\NC$. The idea is that cells $u$ with $c_u=0$ are completely
passive in the freezing dynamics (they fire if and only if all their four
neighbors are already fired and frozen). More precisely, given an instance
$(c,v)$ we consider two cases.
\begin{enumerate}
  \item If $c_v \neq 0$ then we perform in $\AC^0$ the following modification
    of the grid: each vertex $u=(u_x,u_y)$ such that $c_u=0$ is replaced with
    four vertices $u_n,u_e,u_s,u_w$ and the arcs $\{(u_x,u_y+1),u_n\}$,
    $\{(u_x+1,u_y),u_e\}$, $\{(u_x,u_y-1),u_s\}$, $\{(u_x-1,u_y),u_w\}$, and
    $\{u_n,u_e\}$, $\{u_e,u_s\}$, $\{u_s,u_w\}$, $\{u_w,u_n\}$. With state $1$
    on vertices $u$ such that $c_u=4$ and state $0$ elsewhere, answers to the
    prediction under {\em freezing strict majority} on this graph $G$ and to
    the freezing sandpiles prediction problem are identical. Indeed, in the
    {\em freezing strict majority} dynamics the newly created vertices
    corresponding to cells such that $c_u=0$ will never reach state $1$ because
    they always have two of their three neighbors in state $0$. Since cells
    such that $c_u=0$ are completely passive in the sandpile dynamics ({\em
    i.e.} considering that they do not fire leaves the behavior of other cells
    unchanged), and since the questioned cell $v$ is not one of these, $v$ will
    fire from $c$ if and only if it reaches state $1$ in the {\em freezing
    strict majority} dynamics on $G$. Finally, we have $\Delta(G) \leq 4$,
    therefore~\cite{gmt13} gives an $\NC^2$ algorithm to predict the {\em
    freezing strict majority} dynamics. See an example on Figure~\ref{fig:014}.
  \item If $c_v = 0$ then, if furthermore at least one of the four neighbors of
    $v$ is $0$ (or $-\infty$) then $v$ cannot fire and the answer is negative.
    Otherwise we do the same transformation as in the case $c_v \neq 0$, and
    ask if each of the four neighbors of $v$ will fire (still in $\NC^2$). The
    answer for $v$ is positive (it will fire) if and only if all its four
    neighbors will fire.
\end{enumerate}

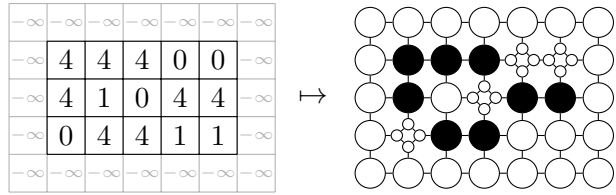
\begin{figure}
  \centerline{\begin{tikzpicture}[scale=.5]
  \draw[black!30] (0,0) grid (7,5);
  \foreach \x in {0,...,6}{
    \node[black!30] at (\x+.5,.5) {\tiny $-\infty$};
    \node[black!30] at (\x+.5,4.5) {\tiny $-\infty$};
  }
  \foreach \y in {1,2,3}{
    \node[black!30] at (.5,\y+.5) {\tiny $-\infty$};
    \node[black!30] at (6.5,\y+.5) {\tiny $-\infty$};
  }
  \draw (1,1) grid (6,4);
  \setcounter{gridcounterx}{1} 
  \setcounter{gridcountery}{1} 
  \foreach \v in {0,4,4,1,1,4,1,0,4,4,4,4,4,0,0}{
    \node at (\value{gridcounterx}+.5,\value{gridcountery}+.5) {$\v$};
    \addtocounter{gridcounterx}{1}
    \ifthenelse{\value{gridcounterx}>5} 
    {
      \setcounter{gridcounterx}{1} 
      \addtocounter{gridcountery}{1}
    }{}
  }
  \node at (8,2.5) {$\mapsto$};
  \tikzstyle{maj} = [draw,circle,inner sep=4pt]
  \tikzstyle{majsmall} = [draw,circle,inner sep=1.5pt]
  \foreach \x in {0,...,6}{
    \node[maj] (maj-\x-0) at (9.5+\x,.5) {};
    \node[maj] (maj-\x-4) at (9.5+\x,4.5) {};
  }
  \foreach \y in {1,2,3}{
    \node[maj] (maj-0-\y) at (9.5,.5+\y) {};
    \node[maj] (maj-6-\y) at (15.5,.5+\y) {};
  }
  \foreach \x/\y in {1/1,3/2,4/3,5/3}{
    \foreach \xx/\yy in {0/1,1/0,0/-1,-1/0}{
      \node[majsmall] (maj-\x-\y-\xx-\yy) at (9.5+\x+\xx*.3,.5+\y+\yy*.3) {};
    }
  }
  \foreach \x/\y in {4/1,5/1,2/2}{
    \node[maj] (maj-\x-\y) at (9.5+\x,.5+\y) {};
  }
  \foreach \x/\y in {2/1,3/1,1/2,4/2,5/2,1/3,2/3,3/3}{
    \node[maj,fill=black] (maj-\x-\y) at (9.5+\x,.5+\y) {};
  }
  \foreach \x/\xx/\y in {
    0/1/0,1/2/0,2/3/0,3/4/0,4/5/0,5/6/0,%
    2/3/1,3/4/1,4/5/1,5/6/1,%
    0/1/2,1/2/2,4/5/2,5/6/2,%
    0/1/3,1/2/3,2/3/3,%
    0/1/4,1/2/4,2/3/4,3/4/4,4/5/4,5/6/4%
  }{
    \draw (maj-\x-\y) to (maj-\xx-\y);
  }
  \foreach \x/\y/\yy in {
    0/0/1,0/1/2,0/2/3,0/3/4,%
    1/2/3,1/3/4,%
    2/0/1,2/1/2,2/2/3,2/3/4,%
    3/0/1,3/3/4,%
    4/0/1,4/1/2,%
    5/0/1,5/1/2,%
    6/0/1,6/1/2,6/2/3,6/3/4%
  }{
    \draw (maj-\x-\y) to (maj-\x-\yy);
  }
  \foreach \x/\y in {1/1,3/2,4/3,5/3}{
    \draw (maj-\x-\y-0-1) to (maj-\x-\y-1-0);
    \draw (maj-\x-\y-1-0) to (maj-\x-\y-0--1);
    \draw (maj-\x-\y-0--1) to (maj-\x-\y--1-0);
    \draw (maj-\x-\y--1-0) to (maj-\x-\y-0-1);
  }
  \draw (maj-1-1-0-1) to (maj-1-2);
  \draw (maj-1-1-1-0) to (maj-2-1);
  \draw (maj-1-1-0--1) to (maj-1-0);
  \draw (maj-1-1--1-0) to (maj-0-1);
  \draw (maj-3-2-0-1) to (maj-3-3);
  \draw (maj-3-2-1-0) to (maj-4-2);
  \draw (maj-3-2-0--1) to (maj-3-1);
  \draw (maj-3-2--1-0) to (maj-2-2);
  \draw (maj-4-3-0-1) to (maj-4-4);
  \draw (maj-4-3-1-0) to (maj-5-3--1-0);
  \draw (maj-4-3-0--1) to (maj-4-2);
  \draw (maj-4-3--1-0) to (maj-3-3);
  \draw (maj-5-3-0-1) to (maj-5-4);
  \draw (maj-5-3-1-0) to (maj-6-3);
  \draw (maj-5-3-0--1) to (maj-5-2);
\end{tikzpicture}}
  \caption{Transformation in $\AC^0$ of a $\{0,1,4\}$-simple sandpile
  configuration to a graph (of maximum degree 4) for the {\em freezing strict
  majority} dynamics~\cite{gmt13}. Vertices in state 0 (resp. 1) are white
  (resp. black).}
  \label{fig:014}
\end{figure}


\begin{proposition}\label{prop:014}
  $\{0,1,4\}${\bf -FSPP} $\in \NC^2$.
\end{proposition}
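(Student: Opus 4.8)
The plan is to reduce $\{0,1,4\}${\bf -FSPP} to the prediction of \emph{freezing strict majority} on an undirected graph of maximum degree at most four, so that the $\NC^2$ algorithm of~\cite{gmt13} applies directly. The key observation driving the construction is that a cell $u$ with $c_u=0$ is completely passive: under the sandpile dynamics it topples only once all four of its neighbors have frozen, at which point the grains it sends reach already-frozen cells and change nothing. Thus whether or not we allow such a cell to fire has no effect on the firing behavior of any other cell, and my task is to encode this passivity faithfully inside a network governed solely by the strict majority rule.

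First I would treat the principal case $c_v \neq 0$, using the gadget described above: each vertex $u$ with $c_u=0$ is replaced by a four-cycle $u_n,u_e,u_s,u_w$, with each gadget vertex joined both to its two cycle-neighbors and to exactly one of the four original grid-neighbors of $u$, and the initial strict-majority configuration carries state $1$ precisely on the vertices with $c_u=4$ and state $0$ everywhere else. Correctness then rests on two verifications. On the one hand, every gadget vertex has degree three with two of its neighbors on the cycle; since those start in state $0$ and strict majority requires all three neighbors active before firing, an immediate induction shows that no gadget vertex ever leaves state $0$, so no signal crosses a former $0$-cell, exactly matching its passivity. On the other hand, on the surviving grid vertices (those with $c_u \in \{1,4\}$) the freezing sandpile toppling condition coincides with the freezing strict majority rule via Proposition~\ref{prop:bool}, so the two dynamics agree on these vertices. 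Since $v$ is one of them, $v$ fires from $c$ if and only if it reaches state $1$ in the constructed graph $G$; as $\Delta(G)\leq 4$ and the construction is clearly an $\AC^0$ transformation, we obtain an $\NC^2$ decision procedure by~\cite{gmt13}.

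It remains to dispose of the degenerate case $c_v = 0$, where the questioned cell is itself passive. If one of the four neighbors of $v$ equals $0$ or $-\infty$, then $v$ can never accumulate four grains and the answer is negative. Otherwise I would run the construction and the $\NC^2$ prediction of the principal case on each of the four neighbors of $v$ in turn, and answer \emph{yes} exactly when all four of them fire, since a $0$-cell topples if and only if every one of its neighbors does.

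The step I expect to demand the most care is the correctness argument in the principal case: one must justify, rather than merely assert, both that the gadget vertices never fire and that suppressing the toppling of $0$-cells leaves the trajectory of every other cell untouched. The passivity observation carries this weight, and the delicate point is to check that the strict majority threshold $\sum_{u\in\neighbors{v}} c_u > 2$ on the surviving vertices reproduces the sandpile firing condition exactly, and not merely up to the irrelevant late toppling of the removed $0$-cells.
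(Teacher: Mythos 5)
Your construction and case analysis are essentially the paper's own (the four-cycle gadget replacing each $0$-cell, state $1$ exactly on the $4$s, and the fourfold query when $c_v=0$), but there is a genuine gap in the passage from the sandpile to the model of~\cite{gmt13}: you never fix the degrees at the boundary of the rectangle. The freezing strict majority rule of~\cite{gmt13} is degree-dependent---a vertex of degree $d$ activates as soon as strictly more than $d/2$ of its neighbors are active---whereas the sandpile condition for a cell holding one grain is ``receive at least $3$ grains'' regardless of how many non-frozen neighbors it actually has; the two coincide only at vertices of degree exactly $4$. On the border of the rectangle they disagree, and your reduction then returns wrong answers. Concretely, take the $3\times 1$ configuration $4,1,4$ and let $v$ be the middle cell: in the sandpile $v$ receives only two grains and never fires, so the instance is negative; in your graph $v$ has degree $2$ with both neighbors initially in state $1$, so strict majority fires it and you answer positively. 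The paper closes exactly this hole by surrounding the rectangle with a ring of permanently inactive state-$0$ vertices (stated for $\{1,4\}$-FSPP, ``it is sufficient to add a border of cells in state $0$'', and visible in Figure~\ref{fig:014}), which raises every original cell to degree exactly $4$ and lets the ring play the role of the frozen outside cells.

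The same conflation appears in your gadget analysis: you claim a degree-$3$ gadget vertex ``requires all three neighbors active'', but under~\cite{gmt13} its threshold is $2$ out of $3$. Your conclusion still stands---by induction each gadget vertex only ever has at most one active neighbor, its unique grid-neighbor, which stays below $2$---so that slip is harmless; but it is the same mistake of reading the hard-coded threshold $\sum_{u\in\neighbors{v}}c_u>2$ of Proposition~\ref{prop:bool} as if it were the graph-majority rule of~\cite{gmt13}, and it is precisely this identification that fails at the boundary and turns the missing padding into an answer-changing bug rather than a cosmetic omission.
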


\subsubsection{$\{2,4\}${\bf -FSPP}}

When {\bf FSPP} is restricted to $\{2,4\}$-simple configurations, according to
Proposition~\ref{prop:bool} it corresponds to a finite freezing Boolean network
on the grid with only {\em non-strict majority} and {\em constant $1$} local
functions, which are the same as {\em non-strict majority} cells initially in
state 1 since we are in a freezing world, and can be decided in $\NC^2$
according to~\cite{gmmo17}.

\begin{proposition}\label{prop:24}
  $\{2,4\}${\bf -FSPP} $\in \NC^2$.
\end{proposition}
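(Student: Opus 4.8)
The plan is to recycle the dictionary of Section~\ref{s:bool} and then hand the problem to the known $\NC^2$ algorithm for freezing non-strict majority on the grid. First I would apply Proposition~\ref{prop:bool}: given a $\{2,4\}$-simple instance $(c,v)$, the prediction ``does $v$ ever fire under $F$'' is equivalent, through $\phi$, to asking whether $v$ ever reaches state $1$ in the associated freezing Boolean network $B_c$. Since $c$ is $\{2,4\}$-simple, the local functions of $B_c$ are the non-strict majority $m$ on every cell with $c_u=2$ and the constant $1$ on every cell with $c_u=4$, and the network starts from the all-$0$ configuration $\phi(c)$ (all interior cells are unfired).

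The key simplification is that, in a freezing setting, a constant-$1$ vertex is interchangeable with a non-strict majority vertex whose initial state is $1$: once a vertex holds state $1$ the freezing discipline pins it there, and $f^m_u$ returns $1$ as soon as $c_u=1$. I would therefore replace $B_c$ by the homogeneous network using $f^m$ at every vertex, initialised with state $1$ on the cells with $c_u=4$ and state $0$ elsewhere. This is precisely the freezing non-strict majority dynamics on $G_{n\times m}$ with von Neumann neighborhood, for which~\cite{gmmo17} supplies an $\NC^2$ prediction algorithm; exactly as in the $\{1,4\}$ case (Proposition~\ref{prop:14}), it suffices to pad with a border of vertices in state $0$ to match the grid setting of~\cite{gmmo17}. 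All the preprocessing (evaluating $\phi(c)$, assigning the homogeneous local function, setting the initial states, adding the border) is local and runs in $\AC^0$, so the whole procedure stays in $\NC^2$.

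The one step that needs care is the correctness of substituting constant-$1$ vertices by pre-fired majority vertices, because it advances those vertices by one time step: they hold state $1$ at time $0$ rather than at time $1$. I would argue that this offset is immaterial for the reachability question. Both dynamics are monotone and freezing, so the set of vertices that ever reach state $1$ is the monotone closure of the seed set of $c_u=4$ cells under the rule ``a majority vertex fires once at least two of its neighbors have fired.'' This closure is a unique maximal fixed point depending only on the seed set and the graph, not on the update order nor on whether the seeds ignite at time $0$ or time $1$; shifting the seeds only shifts the entire firing history by one step and leaves the limit set unchanged. Hence $v$ fires under $F$ if and only if it reaches state $1$ in the homogeneous freezing non-strict majority dynamics (trivially so when $c_v=4$, since $v$ is then a seed), and the $\NC^2$ bound transfers. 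The only genuine obstacle is thus this bookkeeping about timing of the seeds; the substantive computational content is imported wholesale from~\cite{gmmo17}.
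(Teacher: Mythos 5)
Your proposal is correct and takes essentially the same route as the paper: apply Proposition~\ref{prop:bool} to view a $\{2,4\}$-simple instance as a freezing Boolean network of \emph{non-strict majority} and \emph{constant $1$} functions, identify the constant-$1$ vertices with majority vertices initially in state $1$ (legitimate because the dynamics is freezing), and invoke the $\NC^2$ algorithm of~\cite{gmmo17}. The only difference is presentational: you make explicit the one-step timing offset between a constant-$1$ vertex and a pre-fired majority seed, a point the paper's one-sentence proof leaves implicit.
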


\subsubsection{$\{0,3,4\}${\bf -FSPP}}
\label{ss:034}

When {\bf FSPP} is restricted to $\{0,3,4\}$-simple configurations, according
to Proposition~\ref{prop:bool} it corresponds to a finite freezing Boolean
network on the grid with only {\em and}, {\em or} and {\em constant $1$} local
functions. Given an instance $(c,v)$, we consider three cases.
\begin{enumerate}
  \item If $c_v \neq 0$, then we can simply remove the vertices $v$ with
    $c_v=0$ from the graph supporting the finite freezing Boolean network
    dynamics since they are completely passive (they freeze to 1 if and only if
    their four neighbors are already fired and frozen). This construction is
    done in $\AC^0$ and comes down to deciding if there is a path from a cell
    in state $1$ to $v$, which can be done in $\NL$ (choose
    non-deterministically a starting cell in state $1$ and travel
    non-deterministically through a path of length at most $nm$).
  \item If $c_v = 0$ and $c_u \neq 0$ for all $u \in \neighbors{v}$, then we
    compute sequentially the answers of the four instances $(c,u)$ for $u \in
    \neighbors{v}$ (still in $\NL$), and answer positively if and only if all
    these four instances are positive.
  \item If $c_v = 0$ and $c_u = 0$ for at least one $u \in \neighbors{v}$ then
    we can answer negatively: $v$ needs $u$ to go to state $1$ first (strictly
    before $v$ does), and conversely.
\end{enumerate}
Deciding in which of these three cases we are and answering it gives an
algorithm in $\NL$ for $\{0,3,4\}${\bf -FSPP} .

\begin{proposition}\label{prop:034}
  $\{0,3,4\}${\bf -FSPP} $\in \NL$.
\end{proposition}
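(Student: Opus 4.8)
The plan is to translate the instance through Proposition~\ref{prop:bool} into the freezing Boolean network $B_c$, where cells with $c_v=0$ carry the local function $\wedge$, cells with $c_v=3$ carry $\vee$, and cells with $c_v=4$ carry the constant $1$; since a constant-$1$ cell behaves exactly like an $\vee$ cell started in state $1$, I would regard the $c=4$ cells as the only sources, so that the question ``does $v$ ever fire?'' becomes ``does $v$ ever reach state $1$ in $B_c$?''. The structural fact I would establish first is that an $\wedge$ cell can never be an \emph{internal} node of a propagation: since a cell $w$ with $c_w=0$ switches to $1$ only after \emph{all four} of its neighbors have switched to $1$, it cannot be the neighbor that triggers some later cell $u$. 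Indeed, if $w$ triggered $u$, then $w$ reached state $1$ strictly before $u$, yet $u \in \neighbors{w}$ would have had to reach state $1$ before $w$, a contradiction.

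With this in hand I would split on the value $c_v$. When $c_v \neq 0$ (so $v$ is an $\vee$ cell or a source), the claim is that $v$ reaches state $1$ if and only if there is a path in the grid from some source cell ($c=4$) to $v$ all of whose internal vertices are $\vee$ cells ($c=3$), equivalently a path avoiding every vertex with $c=0$. The ``if'' direction is immediate by monotone propagation, one step per edge of the path. The ``only if'' direction is exactly the structural fact above: tracing backwards a triggering neighbor at each time step, every predecessor is forced to be an $\vee$ or source cell, and the resulting chain must bottom out at a source. This reduces the case to reachability from the $c=4$ vertices in the subgraph induced by $\{u : c_u \neq 0\}$, solvable in $\NL$ by guessing a source and walking a path of length at most $nm$.

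For $c_v = 0$, the cell $v$ is an $\wedge$ cell and fires only after all four neighbors fire, strictly before $v$ does. If some neighbor $u$ also has $c_u=0$, then $u$ must fire before $v$ while $v \in \neighbors{u}$ forces $v$ before $u$, so neither fires and the answer is negative. Otherwise every neighbor satisfies $c_u \neq 0$, and because the source-to-$u$ propagations use only $\vee$/source cells and therefore never route through the $\wedge$ cell $v$, the firing of each neighbor is independent of $v$ and reduces to a reachability query of the previous paragraph; I would answer positively iff all four neighbor queries are positive. Assembling the algorithm, deciding which case applies is trivial ($\AC^0$), each reachability query is in $\NL$, and the case $c_v=0$ combines a constant number of $\NL$ queries by a fixed Boolean formula, so by closure of $\NL$ under complement and under constant Boolean combinations the whole procedure stays in $\NL$.

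The step I expect to be the crux is the ``only if'' direction of the reachability characterization: rigorously ruling out that $\wedge$ cells participate in the activation reaching $v$, and confirming that the query cell's own necessarily late firing never feeds back to help a neighbor fire earlier. Everything else — the forward propagation, the mutual-precedence contradiction for two adjacent $c=0$ cells, and the $\NL$ closure bookkeeping — is routine once this activation-chain structure, which rests on the freezing (hence temporally monotone) nature of $\wedge$ guaranteed by Proposition~\ref{prop:bool}, is firmly in place.
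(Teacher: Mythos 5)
Your proof is correct and takes essentially the same route as the paper's: the same translation via Proposition~\ref{prop:bool}, the same three-way case split on $c_v$ and its neighbors, and the same core reduction to source-to-target reachability in the subgraph avoiding $c=0$ cells, decided nondeterministically in $\NL$ (with the four-neighbor conjunction handled by sequential queries). Your backward-tracing argument is just a more explicit justification of the claim the paper states tersely, namely that $0$-cells are ``completely passive'' and so never participate in any activation chain.
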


\subsubsection{$\{2,3,4\}${\bf -FSPP}}

The idea is to reduce the question on a $\{2,3,4\}$-simple configuration to a
question on a $\{2,4\}$-simple configuration, still on the grid. The
transformation is presented on Figure~\ref{fig:234}, each cell at position
$(u_x,u_y) \in \Z^2$ of the $\{2,3,4\}$-simple configuration is transformed
into a macrocell of size $5 \times 6$ whose bottom left corner is at position
$(5u_x,6u_y)$. The questioned cell is placed on the bottom left corner of the
corresponding macrocell (other positions are possible). This reduction can be
computed in constant parallel time, {\em i.e.} in $\AC^0$.

The correctness of the reduction is easily deduced from the abelian property of
sandpiles (the fact that, when the dynamics converges to a stable
configuration, it converges to the same stable configuration regardless of the
order in which firings are performed, in parallel or
sequentially~\cite{glmmp04}). Indeed, if we first consider the firing of values
four in the macrocell corresponding to cell with three grains, then a firing
can occur on the $\{2,3,4\}$-simple configuration if and only if the whole
corresponding macrocell can be fired (otherwise, none of the macrocell's cells
is fired, appart from the initialy fired values four in the macrocells
corresponding to cells with three grains). The result follows by induction.

\begin{figure}
  \centerline{\begin{tikzpicture}[scale=.5]
  \draw (0,0) rectangle node{$2$} ++ (1,1);
  \node at (2,.5) {$\mapsto$};
  \draw (3,-2) grid (9,3);
  \setcounter{gridcounterx}{3} 
  \setcounter{gridcountery}{-2} 
  \foreach \v in {2,2,2,2,2,2,2,2,2,2,2,2,2,2,2,2,2,2,2,2,2,2,2,2,2,2,2,2,2,2}{
    \node at (\value{gridcounterx}+.5,\value{gridcountery}+.5) {$\v$};
    \addtocounter{gridcounterx}{1}
    \ifthenelse{\value{gridcounterx}>8} 
    {
      \setcounter{gridcounterx}{3} 
      \addtocounter{gridcountery}{1}
    }{}
  }
\end{tikzpicture}\hspace*{.4cm}\begin{tikzpicture}[scale=.5]
  \draw (0,0) rectangle node{$3$} ++ (1,1);
  \node at (2,.5) {$\mapsto$};
  \foreach \x/\y in {0/1,0/3,1/0,1/1,1/3,1/4,2/1,2/2,2/3,3/1,3/2,3/3,4/0,4/1,4/3,4/4,5/1,5/3}
    \fill[black!30] (3+\x,-2+\y) rectangle ++ (1,1);
  \draw (3,-2) grid (9,3);
  \setcounter{gridcounterx}{3} 
  \setcounter{gridcountery}{-2} 
  \foreach \v in {2,2,4,4,2,2,2,2,2,2,2,2,4,4,2,2,4,4,2,2,2,2,2,2,2,2,4,4,2,2}{
    \node at (\value{gridcounterx}+.5,\value{gridcountery}+.5) {$\v$};
    \addtocounter{gridcounterx}{1}
    \ifthenelse{\value{gridcounterx}>8} 
    {
      \setcounter{gridcounterx}{3} 
      \addtocounter{gridcountery}{1}
    }{}
  }
\end{tikzpicture}\hspace*{.4cm}\begin{tikzpicture}[scale=.5]
  \draw (0,0) rectangle node{$4$} ++ (1,1);
  \node at (2,.5) {$\mapsto$};
  \draw (3,-2) grid (9,3);
  \setcounter{gridcounterx}{3} 
  \setcounter{gridcountery}{-2} 
  \foreach \v in {4,4,4,4,4,4,4,4,4,4,4,4,4,4,4,4,4,4,4,4,4,4,4,4,4,4,4,4,4,4}{
    \node at (\value{gridcounterx}+.5,\value{gridcountery}+.5) {$\v$};
    \addtocounter{gridcounterx}{1}
    \ifthenelse{\value{gridcounterx}>8} 
    {
      \setcounter{gridcounterx}{3} 
      \addtocounter{gridcountery}{1}
    }{}
  }
\end{tikzpicture}}
  \caption{Cell to macrocell correspondence in the reduction from
  $\{2,3,4\}${\bf -FSPP} to $\{2,4\}${\bf -FSPP}. After one step the grey cells
  in the macrocell corresponding to a cell with three grains have three grains
  (they are neighbor of exactly one cell with four grains).}
  \label{fig:234}
\end{figure}
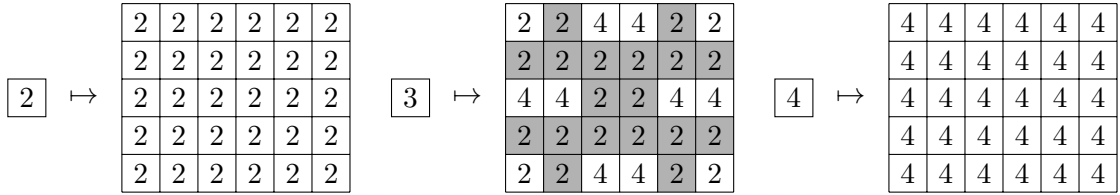

\begin{proposition}\label{prop:234}
  $\{2,3,4\}${\bf -FSPP} $\leq^m_{\AC^0}$ $\{2,4\}${\bf -FSPP}, therefore from
  Propostion~\ref{prop:24} we have $\{2,3,4\}${\bf -FSPP} $\in \NC^2$.
\end{proposition}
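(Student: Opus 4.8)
The plan is to make explicit the cell-to-macrocell reduction depicted in Figure~\ref{fig:234}, check that it is computable in $\AC^0$, prove its correctness through the abelian property, and finally invoke Proposition~\ref{prop:24}. Given a $\{2,3,4\}$-simple instance $(c,v)$ on an $n \times m$ rectangle, I would build a $\{2,4\}$-simple configuration $c'$ on a $5n \times 6m$ rectangle by replacing the cell at $(u_x,u_y)$ with the $5 \times 6$ block of $2$'s and $4$'s prescribed by $c_{(u_x,u_y)} \in \{2,3,4\}$, its bottom-left corner sitting at $(5u_x,6u_y)$; the query cell $v$ is sent to the bottom-left cell $v'$ of its block. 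Since the content written at each output position depends only on the coordinates modulo $(5,6)$ and on the single source value $c_{(u_x,u_y)}$, this transformation is local of constant depth, hence in $\AC^0$, and it only remains to prove that it preserves the answer.

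The heart of the argument is an \emph{atomicity lemma} for blocks, which I would establish using the abelian property of sandpiles (Section~\ref{s:known},~\cite{glmmp04}): since the stable configuration reached, and hence the set of cells that ever fire, does not depend on the firing order, I may analyse the most convenient schedule. I would first fire the preset $4$'s lodged inside the blocks that encode a $3$; by the remark of Figure~\ref{fig:234} each grey cell then receives exactly one grain and thus carries three, so that afterwards every block is poised to fire precisely when its source cell of value $k$ should, namely once it has been fed on at least $4-k$ of its four sides. The lemma to prove is that each block is \emph{all-or-nothing}: either all of its cells eventually freeze and it feeds one grain into each bordering block along their shared edge, or, beyond the preset interior $4$'s, none of its cells ever fires. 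Granting this, I would argue by induction on firing time that a block topples in $c'$ exactly when the corresponding cell fires in $c$, both being driven by the same freezing threshold rule (equivalently, the Boolean network of Proposition~\ref{prop:bool} with local functions $m$, $\vee$, $1$ at the values $2$, $3$, $4$); specialising to $v$ and $v'$ shows that $(c,v)\mapsto(c',v')$ is a valid many-one reduction, and composing with Proposition~\ref{prop:24} yields $\{2,3,4\}$\textbf{-FSPP} $\in \NC^2$.

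I expect the atomicity lemma to carry the whole difficulty. One must check, by a finite inspection of the block patterns and grey-cell placements of Figure~\ref{fig:234}, that an incoming activation propagates through a block so as to topple it entirely exactly when the simulated firing condition is met, that no partial firing ever escapes a block to spuriously activate a neighbour, and that a fully toppled block transmits the correct activation across each of its edges. The abelian property is precisely what makes this tractable, as it licenses choosing, block by block, the firing order that most cleanly exposes these cascades rather than forcing us to reason about the genuine parallel dynamics.
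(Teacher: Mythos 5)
You reproduce the paper's proof faithfully---the same Figure~\ref{fig:234} substitution, the same $\AC^0$ observation, the same appeal to the abelian property, the same block-level induction---but the ``atomicity lemma'' you correctly identify as carrying the whole difficulty is false, and it fails exactly at the value $2$. The macrocell encoding a $2$ is a rectangle filled entirely with $2$'s, and such a block does \emph{not} topple when it is fed on two \emph{opposite} sides: each cell of the two fed border rows receives exactly one grain and reaches $3$, every other cell stays at $2$, so no cell ever reaches $4$ and nothing fires. The abelian property cannot rescue this: it says the set of cells that eventually fire is schedule-independent, and here that set is empty under every schedule. A genuine $2$-cell whose north and south neighbours both fire does fire ($2+1+1=4$). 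Concretely, let $c$ be the $3\times 3$ configuration with $c_{(1,0)}=c_{(1,2)}=4$ and the seven other cells equal to $2$, and let $v=(1,1)$: in $c$ the two $4$'s fire at the first step and $v$ fires at the second, so $(c,v)$ is a positive instance; in the transformed configuration the two all-$4$ macrocells fire, the central all-$2$ macrocell ends up with $3$'s on its top and bottom rows and $2$'s everywhere else, no cell of it (in particular not $v'$) ever fires, and no other macrocell is fed on more than one side, so $(c',v')$ is a negative instance. Your threshold claim (``fed on at least $4-k$ of its four sides'') is sound for $k\in\{3,4\}$ but wrong for $k=2$, and the finite inspection you defer to would have uncovered precisely this case.

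To be fair, this gap is not yours alone: the paper's own correctness sentence (``a firing can occur on the $\{2,3,4\}$-simple configuration if and only if the whole corresponding macrocell can be fired'') asserts the same false equivalence and is refuted by the same instance, so you have inherited a flaw rather than introduced one. What a correct proof along these lines requires is a $\{2,4\}$-gadget that fires when fed on \emph{any} two of its four sides, opposite pairs included; this is exactly the threshold-$T^{(4)}_2$ obstruction that Subsection~\ref{ss:0134} runs into when trying to encode a value $2$ out of $\{0,1,3,4\}$ (there, too, the east-plus-west combination is the one that cannot be realized). Until such a gadget is designed, the composition with Proposition~\ref{prop:24} is not justified, and membership of $\{2,3,4\}$\textbf{-FSPP} in $\NC$ would have to be established by another route, for instance by extending the techniques of~\cite{gmmo17} to grids containing cells of threshold $1$ directly.
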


\subsection{Restrictions as hard to predict as {\bf FSPP}}
\label{ss:fspp}

We begin with a trivial remark that {\bf FSPP} $\leq^m_{\AC^0}$
$\{0,1,2,3,4\}${\bf -FSPP} with the identity function since the two problems
are identical. We treat subsequent cases one by one, and always use the same
reduction technique: a cell of an input for {\bf FSPP} is converted to a {\em
macrocell} ({\em i.e.} a fixed size rectangle of cells) of an input for $A${\bf
-FSPP}, in constant time and in parallel.

\subsubsection{$\{1,2,3,4\}${\bf -FSPP}}

The reduction is defined as follows: given an instance $(c,v)$ of {\bf FSPP}, we
replace each vertex $(u_x,u_y) \in \Z^2$ of $c$ with a macrocell of size $5
\times 5$ whose bottom left corner is at position $(5u_x,5u_y)$. The cell to
macrocell correspondence is given on Figure~\ref{fig:1234}. This reduction can
be computed in constant parallel time, {\em i.e.} in $\AC^0$. Let us denote
$c'$ the obtained configuration with $v'$ the new questioned cell.

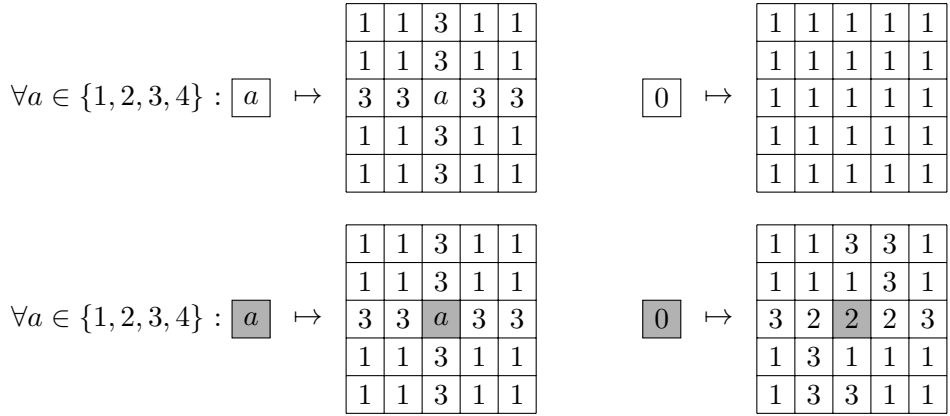
\begin{figure}
  \centerline{
    \begin{tikzpicture}[scale=.5]
  \node[left] at (0,.5) {$\forall a \in \{1,2,3,4\}:$};
  \draw (0,0) rectangle node{$a$} ++ (1,1);
  \node at (2,.5) {$\mapsto$};
  \draw (3,-2) grid (8,3);
  \setcounter{gridcounterx}{3} 
  \setcounter{gridcountery}{-2} 
  \foreach \v in {1,1,3,1,1,1,1,3,1,1,3,3,a,3,3,1,1,3,1,1,1,1,3,1,1}{
    \node at (\value{gridcounterx}+.5,\value{gridcountery}+.5) {$\v$};
    \addtocounter{gridcounterx}{1}
    \ifthenelse{\value{gridcounterx}>7} 
    {
      \setcounter{gridcounterx}{3} 
      \addtocounter{gridcountery}{1}
    }{}
  }
\end{tikzpicture}
    \hspace*{1cm}
    \begin{tikzpicture}[scale=.5]
  \draw (0,0) rectangle node{$0$} ++ (1,1);
  \node at (2,.5) {$\mapsto$};
  \draw (3,-2) grid (8,3);
  \setcounter{gridcounterx}{3} 
  \setcounter{gridcountery}{-2} 
  \foreach \v in {1,1,1,1,1,1,1,1,1,1,1,1,1,1,1,1,1,1,1,1,1,1,1,1,1}{
    \node at (\value{gridcounterx}+.5,\value{gridcountery}+.5) {$\v$};
    \addtocounter{gridcounterx}{1}
    \ifthenelse{\value{gridcounterx}>7} 
    {
      \setcounter{gridcounterx}{3} 
      \addtocounter{gridcountery}{1}
    }{}
  }
\end{tikzpicture}
  }
  \vspace*{1em}
  \centerline{
    \begin{tikzpicture}[scale=.5]
  \node[left] at (0,.5) {$\forall a \in \{1,2,3,4\}:$};
  \fill[black!30] (0,0) rectangle ++ (1,1); 
  \draw (0,0) rectangle node{$a$} ++ (1,1);
  \node at (2,.5) {$\mapsto$};
  \fill[black!30] (5,0) rectangle ++ (1,1); 
  \draw (3,-2) grid (8,3);
  \setcounter{gridcounterx}{3} 
  \setcounter{gridcountery}{-2} 
  \foreach \v in {1,1,3,1,1,1,1,3,1,1,3,3,a,3,3,1,1,3,1,1,1,1,3,1,1}{
    \node at (\value{gridcounterx}+.5,\value{gridcountery}+.5) {$\v$};
    \addtocounter{gridcounterx}{1}
    \ifthenelse{\value{gridcounterx}>7} 
    {
      \setcounter{gridcounterx}{3} 
      \addtocounter{gridcountery}{1}
    }{}
  }
\end{tikzpicture}
    \hspace*{1cm}
    \begin{tikzpicture}[scale=.5]
  \fill[black!30] (0,0) rectangle ++ (1,1); 
  \draw (0,0) rectangle node{$0$} ++ (1,1);
  \node at (2,.5) {$\mapsto$};
  \fill[black!30] (5,0) rectangle ++ (1,1); 
  \draw (3,-2) grid (8,3);
  \setcounter{gridcounterx}{3} 
  \setcounter{gridcountery}{-2} 
  \foreach \v in {1,3,3,1,1,1,3,1,1,1,3,2,2,2,3,1,1,1,3,1,1,1,3,3,1}{
    \node at (\value{gridcounterx}+.5,\value{gridcountery}+.5) {$\v$};
    \addtocounter{gridcounterx}{1}
    \ifthenelse{\value{gridcounterx}>7} 
    {
      \setcounter{gridcounterx}{3} 
      \addtocounter{gridcountery}{1}
    }{}
  }
\end{tikzpicture}
  }
  \caption{Cell to macrocell correspondence in the reduction from {\bf FSPP} to
  $\{1,2,3,4\}${\bf -FSPP}. Top: cells different from $v$. Bottom: $v$, with
  the new questioned cell highlighted.}
  \label{fig:1234}
\end{figure}

We now argue in details that $(c,v) \in$ {\bf FSPP} if and only if $(c',v')
\in$ $\{1,2,3,4\}${\bf -FSPP}, {\em i.e.} the reduction is correct. First,
$(c',v')$ is a valid instance of $\{1,2,3,4\}${\bf -FSPP} since $c'$ is a
finite $\{1,2,3,4\}$-simple configuration. Except for the questioned cell and
cells without grains (these latter having no influence on the dynamics),
there is a strict correspondence between the dynamics of $c$ and $c'$: if
vertex $(u_x,u_y)$ of $c$ fires at time $t$, then vertex $(5u_x+2,5u_y+2)$ of
$c'$ fires at time $5t$. Indeed, each background value 1 surrounding lines and
columns of 3 (which link the centers of macrocells) is neighbor of at most two
values 3 (even if we consider the macrocells neighbor to the macrocell
corresponding to $v$ when $c_v=0$), therefore none of them is ever fired and
the correspondence is strict. Regarding the new questioned cell $v'$ and its
associated macrocell, one can simply remark that for any value of $c_v$ the new
cell $v'$ is fired if and only if $4-c_v$ centers of neighboring macrocells are
fired, and that when this is not (yet) the case then no value $1$ in this
macrocell is fired. As a consequence we get the result.

\begin{proposition}\label{prop:1234}
  {\bf FSPP} $\leq^m_{\AC^0}$ $\{1,2,3,4\}${\bf -FSPP}.
\end{proposition}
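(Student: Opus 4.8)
The plan is to verify the explicit macrocell reduction already sketched before the statement, and then prove its correctness. Since each cell $(u_x,u_y)$ of the input $c$ is replaced independently by the fixed $5\times 5$ block of Figure~\ref{fig:1234} placed at $(5u_x,5u_y)$, the map $(c,v)\mapsto(c',v')$ is a constant-depth, purely local transformation and hence lies in $\AC^0$; this part is immediate. The substance is to show $(c,v)\in$ \textbf{FSPP} if and only if $(c',v')\in\{1,2,3,4\}$\textbf{-FSPP}.

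First I would set up the central correspondence for the ``generic'' macrocells, those coding a cell with $c_{(u_x,u_y)}\in\{1,2,3,4\}$. Such a block carries the original value $a$ at its center $(5u_x+2,5u_y+2)$, with four arms of cells in state $3$ joining this center to the centers of the four neighbouring macrocells. I would prove by induction on $t$ the statement: \emph{the center of the macrocell of $(u_x,u_y)$ fires at time $5t$ in $c'$ if and only if $(u_x,u_y)$ fires at time $t$ in $c$}. The engine of the induction is that a firing propagates along an arm at one cell per step: a center that fires at time $5t$ turns its adjacent $3$ into a $4$, which fires at $5t+1$, and so on along the arm, so that exactly five steps later one grain is deposited on the neighbouring center, provided the arm is not already frozen. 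Because the sandpile dynamics is freezing and parallel, the number of grains a center has received by time $5t$ equals the number of neighbours of $(u_x,u_y)$ that fired strictly before time $t$, which is exactly the quantity driving the original threshold $c_v+\#\{\text{fired neighbours}\}\geq 4$; this closes the induction.

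Two points must be checked to make this airtight, and I expect most of the work to sit there. (i) No spurious firing off the arms: every background $1$ flanking an arm is adjacent to at most two cells in state $3$ (as one reads off Figure~\ref{fig:1234}), so it never climbs above $3$ and never fires, which guarantees that the only interaction between macrocells passes through the arms. (ii) Lost grains and collisions: when a center fires towards an already-frozen arm (its neighbour fired earlier) the grain is absorbed with no effect, and when two opposite centers fire at the same time $5t$ their two waves meet inside the shared arm and each reaches an already-frozen cell, so both grains are lost; both phenomena faithfully mirror the parallel sandpile, where grains sent to a frozen or simultaneously-firing neighbour are equally wasted.

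It remains to treat cells with no grains and the questioned cell. A non-questioned $0$-cell is coded by the all-$1$ macrocell, which is inert: an incoming wave reaching its border deposits a single grain on an edge $1$ and dies, since that $1$ has only this one potentially-firing neighbour. This matches the fact that a $0$-cell of $c$ is completely passive, as it can fire only once all four neighbours are frozen, so its own grains are always wasted and it never influences another cell. The questioned cell is placed at the center of its macrocell: when $c_v\in\{1,2,3,4\}$ this center is exactly the threshold-$(4-c_v)$ gate analysed above, so $v'$ fires if and only if $v$ does; when $c_v=0$ we substitute the dedicated $5\times 5$ gadget of Figure~\ref{fig:1234} (bottom right), and the remaining task --- the main obstacle of the proof --- is the finite but delicate verification that this gadget fires its highlighted cell if and only if all four neighbouring centers fire, that it fires none of its internal $1$'s before that threshold is reached, and that it leaks no grain to a neighbouring macrocell before firing (a leak afterwards being harmless, the firing of $v'$ being the sought answer). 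Since a $0$-cell of $c$ fires exactly when its four neighbours do, this yields $v'$ fires if and only if $v$ does, establishing correctness of the reduction.
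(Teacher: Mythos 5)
Your proposal is correct and follows essentially the same argument as the paper's proof: the same $\AC^0$ macrocell substitution, the same $t \mapsto 5t$ firing-time correspondence propagated along the arms of $3$'s, the same observation that background $1$'s never fire (each is adjacent to at most two $3$'s) and that $0$-cells are passive/inert on both sides, and the same threshold analysis of the questioned macrocell. The finite verification of the $c_v=0$ gadget that you defer is asserted at exactly the same level of detail in the paper (``one can simply remark that \ldots{} $v'$ is fired if and only if $4-c_v$ centers of neighboring macrocells are fired''), so your write-up is not less complete than the paper's own proof.
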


\subsubsection{$\{0,2,3,4\}${\bf -FSPP}}

The reduction is defined as follows: given an instance $(c,v)$ of {\bf FSPP}, we
replace each vertex $(u_x,u_y) \in \Z^2$ of $c$ with a macrocell of size $5
\times 5$ whose bottom left corner is at position $(5u_x,5u_y)$. The cell to
macrocell correspondence is given on Figure~\ref{fig:0234}. This reduction can
be computed in constant parallel time, {\em i.e.} in $\AC^0$. Let us denote
$c'$ the obtained configuration with $v'$ the new questioned cell.

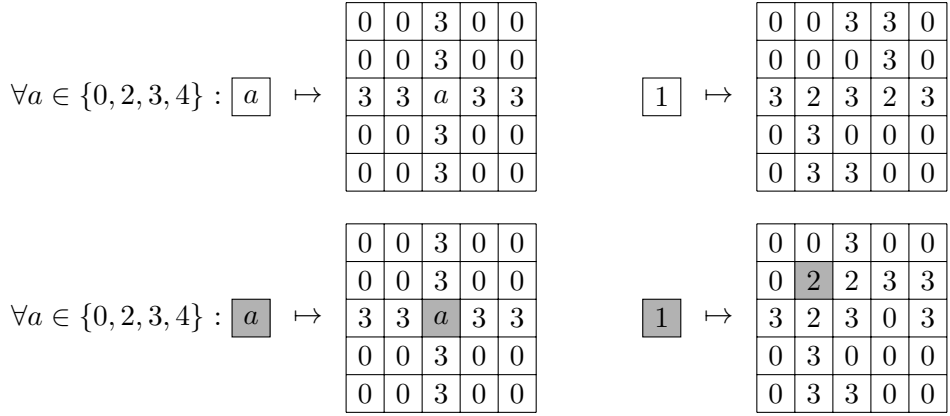
\begin{figure}
  \centerline{
    \begin{tikzpicture}[scale=.5]
  \node[left] at (0,.5) {$\forall a \in \{0,2,3,4\}:$};
  \draw (0,0) rectangle node{$a$} ++ (1,1);
  \node at (2,.5) {$\mapsto$};
  \draw (3,-2) grid (8,3);
  \setcounter{gridcounterx}{3} 
  \setcounter{gridcountery}{-2} 
  \foreach \v in {0,0,3,0,0,0,0,3,0,0,3,3,a,3,3,0,0,3,0,0,0,0,3,0,0}{
    \node at (\value{gridcounterx}+.5,\value{gridcountery}+.5) {$\v$};
    \addtocounter{gridcounterx}{1}
    \ifthenelse{\value{gridcounterx}>7} 
    {
      \setcounter{gridcounterx}{3} 
      \addtocounter{gridcountery}{1}
    }{}
  }
\end{tikzpicture}
    \hspace*{1cm}
    \begin{tikzpicture}[scale=.5]
  \draw (0,0) rectangle node{$1$} ++ (1,1);
  \node at (2,.5) {$\mapsto$};
  \draw (3,-2) grid (8,3);
  \setcounter{gridcounterx}{3} 
  \setcounter{gridcountery}{-2} 
  \foreach \v in {0,3,3,0,0,0,3,0,0,0,3,2,3,2,3,0,0,0,3,0,0,0,3,3,0}{
    \node at (\value{gridcounterx}+.5,\value{gridcountery}+.5) {$\v$};
    \addtocounter{gridcounterx}{1}
    \ifthenelse{\value{gridcounterx}>7} 
    {
      \setcounter{gridcounterx}{3} 
      \addtocounter{gridcountery}{1}
    }{}
  }
\end{tikzpicture}
  }
  \vspace*{1em}
  \centerline{
    \begin{tikzpicture}[scale=.5]
  \node[left] at (0,.5) {$\forall a \in \{0,2,3,4\}:$};
  \fill[black!30] (0,0) rectangle ++ (1,1); 
  \draw (0,0) rectangle node{$a$} ++ (1,1);
  \node at (2,.5) {$\mapsto$};
  \fill[black!30] (5,0) rectangle ++ (1,1); 
  \draw (3,-2) grid (8,3);
  \setcounter{gridcounterx}{3} 
  \setcounter{gridcountery}{-2} 
  \foreach \v in {0,0,3,0,0,0,0,3,0,0,3,3,a,3,3,0,0,3,0,0,0,0,3,0,0}{
    \node at (\value{gridcounterx}+.5,\value{gridcountery}+.5) {$\v$};
    \addtocounter{gridcounterx}{1}
    \ifthenelse{\value{gridcounterx}>7} 
    {
      \setcounter{gridcounterx}{3} 
      \addtocounter{gridcountery}{1}
    }{}
  }
\end{tikzpicture}
    \hspace*{1cm}
    \begin{tikzpicture}[scale=.5]
  \fill[black!30] (0,0) rectangle ++ (1,1); 
  \draw (0,0) rectangle node{$1$} ++ (1,1);
  \node at (2,.5) {$\mapsto$};
  \fill[black!30] (4,1) rectangle ++ (1,1); 
  \draw (3,-2) grid (8,3);
  \setcounter{gridcounterx}{3} 
  \setcounter{gridcountery}{-2} 
  \foreach \v in {0,3,3,0,0,0,3,0,0,0,3,2,3,0,3,0,2,2,3,3,0,0,3,0,0}{
    \node at (\value{gridcounterx}+.5,\value{gridcountery}+.5) {$\v$};
    \addtocounter{gridcounterx}{1}
    \ifthenelse{\value{gridcounterx}>7} 
    {
      \setcounter{gridcounterx}{3} 
      \addtocounter{gridcountery}{1}
    }{}
  }
\end{tikzpicture}
  }
  \caption{Cell to macrocell correspondence in the reduction from {\bf FSPP} to
  $\{0,2,3,4\}${\bf -FSPP}. Top: cells different from $v$. Bottom: $v$, with
  the new questioned cell highlighted.}
  \label{fig:0234}
\end{figure}

The argumentation regarding the correctness of this reduction is analogous to
the case of Proposition~\ref{prop:1234}, except that firings may not be
perfectly synchronized because of the macrocell corresponding to the value 1
doing some zigzag, but this has no consequence thanks to the so called {\em
abelian property} of sandpiles which still holds on freezing sandpiles (any
sequence of firings in $c$ is reproduced in $c'$, and conversely).

\begin{proposition}\label{prop:0234}
  {\bf FSPP} $\leq^m_{\AC^0}$ $\{0,2,3,4\}${\bf -FSPP}.
\end{proposition}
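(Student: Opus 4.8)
The plan is to reuse the macrocell scaffolding of Proposition~\ref{prop:1234} almost verbatim and to isolate the single genuinely new ingredient, namely the gadget simulating a cell with one grain using only values in $\{0,2,3,4\}$. First I would check that the map $(c,v)\mapsto(c',v')$ of Figure~\ref{fig:0234} is sound and $\AC^0$-computable: each original cell is sent to a $5\times 5$ macrocell placed at $(5u_x,5u_y)$, all produced values lie in $\{0,2,3,4\}$, the result is a finite $\{0,2,3,4\}$-simple configuration, and the whole correspondence is local, hence computable in constant parallel time. I would then record the ``clean'' part of the simulation exactly as in Proposition~\ref{prop:1234}: the background value $0$ is completely passive (each such cell is neighbour of at most two wire cells of value $3$, hence never accumulates the four grains it would need to fire), while the four length-two arms of value-$3$ wires emanating from a macrocell center act as conduits, a single grain travelling along an arm between adjacent macrocells. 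For a directly encoded value $a\in\{0,2,3,4\}$ this makes the center fire exactly when $4-a$ of its four neighbour macrocells have fired, which is precisely the freezing sandpile rule for a cell of content $a$; when the center fires it in turn injects one grain into each of its four arms, reproducing the grain distribution of the original firing.

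The crux is the macrocell encoding the value $1$ (top-right and bottom-right of Figure~\ref{fig:0234}), for which the forbidden content $1$ --- a cell that must fire exactly when three of its four neighbours fire --- is emulated by a short zigzag carrying two cells of value $2$. The key statement to establish is that the designated cell of this gadget (the center for an ordinary value-$1$ cell, and the highlighted cell for the questioned macrocell when $c_v=1$) fires if and only if at least three of the four incoming arm signals arrive, and that, as long as strictly fewer than the required number of signals have arrived, no cell of the gadget fires in a way that would push a grain out along any of its four arms. I would prove this by a finite case analysis over the $2^4$ subsets of neighbour macrocells that fire, tracking the order in which the value-$2$ and value-$3$ cells of the zigzag accumulate grains and checking both the firing of the designated cell and the absence of premature output on every arm. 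This bookkeeping is the only place where a genuinely new argument is required, and it is where I expect the difficulty to lie, precisely because the zigzag desynchronises the four arms.

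Finally, I would upgrade the ``strict correspondence'' of Proposition~\ref{prop:1234} to a correspondence that tolerates this desynchronisation. The value-$1$ gadget no longer fires its outputs simultaneously at a uniform multiple of the time step, since the zigzag makes one or two arms lag; however, the set of cells that eventually fire is what matters, and this set is independent of the order in which firings are performed. Invoking the abelian property of (freezing) sandpiles~\cite{glmmp04} --- already used for Proposition~\ref{prop:234} --- I would argue that any sequence of firings of $c$ is reproduced, macrocell by macrocell, by a sequence of firings of $c'$, and conversely that every firing in $c'$ projects back to a legal firing in $c$; in particular the designated cell $v'$ fires in $c'$ if and only if $v$ fires in $c$. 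Combined with the $\AC^0$-computability of the map, this yields {\bf FSPP} $\leq^m_{\AC^0}$ $\{0,2,3,4\}${\bf -FSPP}. The main obstacle remains the case analysis of the value-$1$ zigzag gadget together with the verification that no partial firing leaks out of a value-$1$ macrocell before its designated cell fires, since a single spurious output grain would corrupt a neighbouring macrocell and break the reduction.
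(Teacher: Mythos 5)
Your proposal is correct and follows essentially the same route as the paper's proof: the same $5\times 5$ macrocell reduction of Figure~\ref{fig:0234}, correctness argued by analogy with Proposition~\ref{prop:1234} for the directly encoded values, and the abelian property of (freezing) sandpiles invoked to absorb the desynchronisation caused by the zigzag value-$1$ gadget. Your explicit case analysis of the gadget and the no-premature-leak check simply spell out details the paper leaves implicit.
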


\subsubsection{$\{0,1,2,4\}${\bf -FSPP}}

The reduction is defined as follows: given an instance $(c,v)$ of {\bf FSPP}, we
replace each vertex $(u_x,u_y) \in \Z^2$ of $c$ with a macrocell of size $7
\times 7$ whose bottom left corner is at position $(7u_x,7u_y)$. The cell to
macrocell correspondence is given on Figure~\ref{fig:0124}. This reduction can
be computed in constant parallel time, {\em i.e.} in $\AC^0$. In the
constructed macrocells, each value 2 is neighbor of exactly one value 4, and
consequently all become value 3. The rest of the argumentation regarding the
correctness of this reduction is analogous to the case of
Proposition~\ref{prop:0234}.

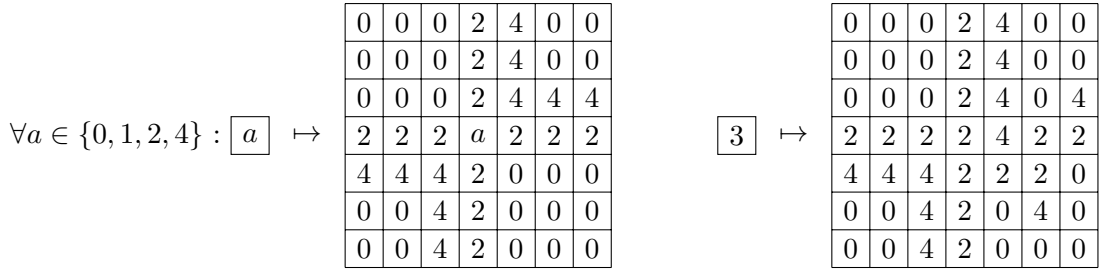
\begin{figure}
  \centerline{
    \begin{tikzpicture}[scale=.5]
  \node[left] at (0,.5) {$\forall a \in \{0,1,2,4\}:$};
  \draw (0,0) rectangle node{$a$} ++ (1,1);
  \node at (2,.5) {$\mapsto$};
  \draw (3,-3) grid (10,4);
  \setcounter{gridcounterx}{3} 
  \setcounter{gridcountery}{-3} 
  \foreach \v in {0,0,4,2,0,0,0,0,0,4,2,0,0,0,4,4,4,2,0,0,0,2,2,2,a,2,2,2,0,0,0,2,4,4,4,0,0,0,2,4,0,0,0,0,0,2,4,0,0}{
    \node at (\value{gridcounterx}+.5,\value{gridcountery}+.5) {$\v$};
    \addtocounter{gridcounterx}{1}
    \ifthenelse{\value{gridcounterx}>9} 
    {
      \setcounter{gridcounterx}{3} 
      \addtocounter{gridcountery}{1}
    }{}
  }
\end{tikzpicture}
    \hspace*{1cm}
    \begin{tikzpicture}[scale=.5]
  \draw (0,0) rectangle node{$3$} ++ (1,1);
  \node at (2,.5) {$\mapsto$};
  \draw (3,-3) grid (10,4);
  \setcounter{gridcounterx}{3} 
  \setcounter{gridcountery}{-3} 
  \foreach \v in {0,0,4,2,0,0,0,0,0,4,2,0,4,0,4,4,4,2,2,2,0,2,2,2,2,4,2,2,0,0,0,2,4,0,4,0,0,0,2,4,0,0,0,0,0,2,4,0,0}{
    \node at (\value{gridcounterx}+.5,\value{gridcountery}+.5) {$\v$};
    \addtocounter{gridcounterx}{1}
    \ifthenelse{\value{gridcounterx}>9} 
    {
      \setcounter{gridcounterx}{3} 
      \addtocounter{gridcountery}{1}
    }{}
  }
\end{tikzpicture}
  }
  \vspace*{1em}
  \caption{Cell to macrocell correspondence in the reduction from {\bf FSPP} to
  $\{0,1,2,4\}${\bf -FSPP}. Macrocells corresponding to the questioned cell $v$
  are identical, with the new questioned cell in the center (relative position
  $(3,3)$).}
  \label{fig:0124}
\end{figure}

\begin{proposition}\label{prop:0124}
  {\bf FSPP} $\leq^m_{\AC^0}$ $\{0,1,2,4\}${\bf -FSPP}.
\end{proposition}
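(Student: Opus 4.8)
The plan is to prove the only nontrivial half of the equivalence underlying the stated reduction: that the $7\times 7$ macrocell map of Figure~\ref{fig:0124} satisfies $(c,v) \in$ \textbf{FSPP} if and only if $(c',v') \in \{0,1,2,4\}$\textbf{-FSPP}, its $\AC^0$ computability being already granted. The guiding idea is that a single application of $F$ turns $c'$ into a value-$3$-wire network of exactly the kind treated in Proposition~\ref{prop:0234}, after which correctness is inherited from that proposition through the abelian property.

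First I would examine the step $F(c')$ cell by cell on the macrocell template (and across its four interfaces with neighboring macrocells). The crucial finite check is that each \emph{structural} value-$2$ cell --- the wire cells linking macrocell centers, together with the central cell of the input-$3$ gadget --- has exactly one neighbor of value $4$, while every center carrying an input value $a \in \{0,1,2,4\}$ has none. It follows that after one step each such value-$2$ cell becomes $3$, each value-$4$ cell freezes, each center with $a \in \{0,1,2,4\}$ keeps the value $a$ (so that it fires after exactly $4-a$ incoming wire signals, just as an original cell of content $a$ does), and the value-$2$ center of the input-$3$ gadget becomes $3$ (firing after a single signal, as an original cell of content $3$).

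Second, I would argue that these initial firings introduce no parasitic dynamics. The grains emitted by the value-$4$ cells land either on wire cells, which is the intended bump to $3$, or on background value-$0$ cells, which thereby become value-$1$ or value-$2$ cells all of whose remaining neighbors are frozen or empty; such cells never reach threshold and stay inert forever. Moreover no structural value-$2$ cell receives two grains, so at time $1$ the only cells at threshold are the four innermost wire cells of each input-$4$ center, which received a grain both from the center and from their bump cell; their firing correctly launches the four outgoing signals of a content-$4$ cell, and no other cell fires until a signal reaches it from a neighboring macrocell. Hence, on its active part, $F(c')$ is precisely a value-$3$-wire configuration of the Proposition~\ref{prop:0234} type, with centers that fire after $4-k$ incoming signals where $k$ is the encoded content, modulo the inert cells and a constant time shift.

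With this, correctness follows along the lines of Proposition~\ref{prop:0234}: by the abelian property of freezing sandpiles every firing sequence of $c$ is reproduced in $c'$ and conversely, so the center $v'$ eventually fires if and only if $v$ does. The step I expect to be the real obstacle is the bookkeeping of the input-$3$ gadget. Because its central value-$2$ cell is raised to $3$ by a neighboring $4$ that then freezes, the center loses its direct link to the right-hand wire, and that link must instead be carried by the detour of (bumped) cells visible in Figure~\ref{fig:0124}. I would have to check that this asymmetric, longer path still conducts a signal in \emph{both} directions and that a single signal entering the gadget from any side fires the center and triggers all four outgoing wires. Once this adjacency accounting and the inertness of the parasitic cells are confirmed, the abelian property delivers the global equivalence immediately.
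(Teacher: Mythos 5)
Your proposal is correct and takes essentially the same route as the paper's own (much terser) proof: the same macrocells of Figure~\ref{fig:0124}, the same key observation that each structural value $2$ has exactly one neighboring $4$ and hence becomes a $3$ after one step, turning the configuration into a $3$-wire network of the Proposition~\ref{prop:0234} type, and the same appeal to the abelian property to conclude. You are in fact more explicit than the paper, which handles the inert background cells and the detoured east connection of the value-$3$ gadget only implicitly under the phrase that the argument is ``analogous to'' Proposition~\ref{prop:0234}.
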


\subsubsection{$\{1,2,4\}${\bf -FSPP}}

We give a reduction from $\{0,1,2,4\}${\bf -FSPP} to $\{1,2,4\}${\bf -FSPP}, by
replacing each vertex $(u_x,u_y) \in \Z^2$ of $c$ with a macrocell of size $5
\times 7$ whose bottom left corner is at position $(5u_x,7u_y)$. The cell to
macrocell correspondence is given on Figure~\ref{fig:124}. This reduction can
be computed in constant parallel time, {\em i.e.} in $\AC^0$.

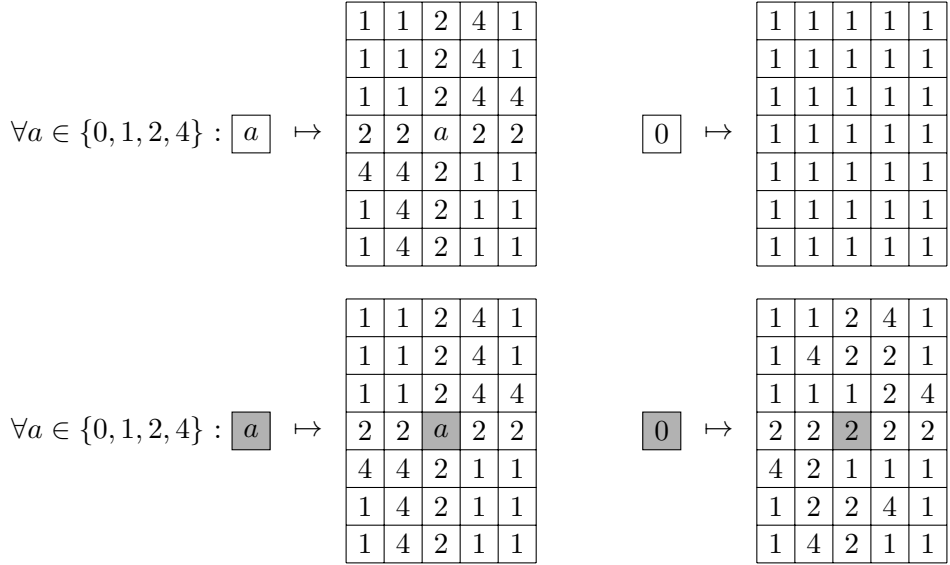
\begin{figure}
  \centerline{
    \begin{tikzpicture}[scale=.5]
  \node[left] at (0,.5) {$\forall a \in \{0,1,2,4\}:$};
  \draw (0,0) rectangle node{$a$} ++ (1,1);
  \node at (2,.5) {$\mapsto$};
  \draw (3,-3) grid (8,4);
  \setcounter{gridcounterx}{3} 
  \setcounter{gridcountery}{-3} 
  \foreach \v in {1,4,2,1,1,1,4,2,1,1,4,4,2,1,1,2,2,a,2,2,1,1,2,4,4,1,1,2,4,1,1,1,2,4,1}{
    \node at (\value{gridcounterx}+.5,\value{gridcountery}+.5) {$\v$};
    \addtocounter{gridcounterx}{1}
    \ifthenelse{\value{gridcounterx}>7} 
    {
      \setcounter{gridcounterx}{3} 
      \addtocounter{gridcountery}{1}
    }{}
  }
\end{tikzpicture}
    \hspace*{1cm}
    \begin{tikzpicture}[scale=.5]
  \draw (0,0) rectangle node{$0$} ++ (1,1);
  \node at (2,.5) {$\mapsto$};
  \draw (3,-3) grid (8,4);
  \setcounter{gridcounterx}{3} 
  \setcounter{gridcountery}{-3} 
  \foreach \v in {1,1,1,1,1,1,1,1,1,1,1,1,1,1,1,1,1,1,1,1,1,1,1,1,1,1,1,1,1,1,1,1,1,1,1}{
    \node at (\value{gridcounterx}+.5,\value{gridcountery}+.5) {$\v$};
    \addtocounter{gridcounterx}{1}
    \ifthenelse{\value{gridcounterx}>7} 
    {
      \setcounter{gridcounterx}{3} 
      \addtocounter{gridcountery}{1}
    }{}
  }
\end{tikzpicture}
  }
  \vspace*{1em}
  \centerline{
    \begin{tikzpicture}[scale=.5]
  \node[left] at (0,.5) {$\forall a \in \{0,1,2,4\}:$};
  \fill[black!30] (0,0) rectangle ++ (1,1); 
  \draw (0,0) rectangle node{$a$} ++ (1,1);
  \node at (2,.5) {$\mapsto$};
  \fill[black!30] (5,0) rectangle ++ (1,1); 
  \draw (3,-3) grid (8,4);
  \setcounter{gridcounterx}{3} 
  \setcounter{gridcountery}{-3} 
  \foreach \v in {1,4,2,1,1,1,4,2,1,1,4,4,2,1,1,2,2,a,2,2,1,1,2,4,4,1,1,2,4,1,1,1,2,4,1}{
    \node at (\value{gridcounterx}+.5,\value{gridcountery}+.5) {$\v$};
    \addtocounter{gridcounterx}{1}
    \ifthenelse{\value{gridcounterx}>7} 
    {
      \setcounter{gridcounterx}{3} 
      \addtocounter{gridcountery}{1}
    }{}
  }
\end{tikzpicture}
    \hspace*{1cm}
    \begin{tikzpicture}[scale=.5]
  \fill[black!30] (0,0) rectangle ++ (1,1); 
  \draw (0,0) rectangle node{$0$} ++ (1,1);
  \node at (2,.5) {$\mapsto$};
  \fill[black!30] (5,0) rectangle ++ (1,1); 
  \draw (3,-3) grid (8,4);
  \setcounter{gridcounterx}{3} 
  \setcounter{gridcountery}{-3} 
  \foreach \v in {1,4,2,1,1,1,2,2,4,1,4,2,1,1,1,2,2,2,2,2,1,1,1,2,4,1,4,2,2,1,1,1,2,4,1}{
    \node at (\value{gridcounterx}+.5,\value{gridcountery}+.5) {$\v$};
    \addtocounter{gridcounterx}{1}
    \ifthenelse{\value{gridcounterx}>7} 
    {
      \setcounter{gridcounterx}{3} 
      \addtocounter{gridcountery}{1}
    }{}
  }
\end{tikzpicture}
  }
  \caption{Cell to macrocell correspondence in the reduction from
  $\{0,1,2,4\}${\bf -FSPP} to $\{1,2,4\}${\bf -FSPP}. Top: cells different from
  $v$. Bottom: $v$, with the new questioned cell highlighted.}
  \label{fig:124}
\end{figure}

The argumentation regarding the correctness of this reduction is analogous to
the case of Proposition~\ref{prop:0124}, with the additional remark that some
values $1$ in the background may fire, without any side effect. Since $\AC^0$
is closed by composition, Proposition~\ref{prop:0124} gives the result.

\begin{proposition}\label{prop:124}
  {\bf FSPP} $\leq^m_{\AC^0}$ $\{1,2,4\}${\bf -FSPP}.
\end{proposition}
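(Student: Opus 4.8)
The plan is to reduce not from {\bf FSPP} directly but from $\{0,1,2,4\}${\bf -FSPP}, which is already {\bf FSPP}-hard by Proposition~\ref{prop:0124}; since $\AC^0$ reductions compose, it suffices to establish $\{0,1,2,4\}${\bf -FSPP} $\leq^m_{\AC^0}$ $\{1,2,4\}${\bf -FSPP}. The single forbidden value to eliminate is then $0$, and the only ``inert'' value available to stand in for it is $1$. Following the macrocell scheme of Figure~\ref{fig:124}, I would map every cell of the $\{0,1,2,4\}$-simple input to a $5 \times 7$ macrocell, encoding a value-$0$ cell by the all-$1$ macrocell. This background of $1$'s plays the role previously played by the background of $0$'s in Proposition~\ref{prop:0124}: it carries the firing signals between the centers of adjacent macrocells while, in the stable regime, leaving the essential correspondence intact.

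First I would verify that the map is computable in $\AC^0$: each output cell value depends only on its coordinates taken modulo the macrocell dimensions together with the finitely many input values in a bounded window, so the transformation is realized by a constant-depth, unbounded fan-in circuit; I would likewise fix the image $v'$ of the questioned cell $v$ as indicated in the figure, completing the description of $(c,v) \mapsto (c',v')$. The correctness argument then follows the same template as Proposition~\ref{prop:0124}: invoking the abelian property of freezing sandpiles, a firing of cell $(u_x,u_y)$ in $c$ corresponds to a firing of the center of the associated macrocell in $c'$, and conversely, whence $(c,v) \in \{0,1,2,4\}${\bf -FSPP} if and only if $(c',v') \in \{1,2,4\}${\bf -FSPP}.

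The one genuinely new phenomenon, and what I expect to be the main obstacle, is that background $1$'s --- unlike the passive background $0$'s of the previous reduction --- may themselves reach content $4$ and fire, since a value-$1$ cell fires as soon as three of its neighbors do. The crux is to confirm that such firings are harmless. Because the model is freezing, the set of cells that ever fire is well defined and, by the abelian property, independent of the firing order; so the only thing to check is the geometric fact that no background $1$ can propagate a firing into a macrocell center that the intended simulation does not already fire. In other words, background $1$'s only ever fire as a \emph{consequence} of firings the simulation already commits to, never as a spurious \emph{cause} of new center firings. Once this is verified for the layout of Figure~\ref{fig:124}, the reduction is correct, and composing with Proposition~\ref{prop:0124} gives {\bf FSPP} $\leq^m_{\AC^0}$ $\{1,2,4\}${\bf -FSPP}.
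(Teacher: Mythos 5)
Your proposal matches the paper's proof essentially step for step: the paper likewise reduces from $\{0,1,2,4\}${\bf -FSPP} (using Proposition~\ref{prop:0124} and closure of $\AC^0$ under composition), uses the same $5 \times 7$ macrocell correspondence of Figure~\ref{fig:124} with value-$0$ cells encoded by all-$1$ macrocells, and notes precisely the point you isolate as the crux --- that some background values $1$ may fire, but without any side effect on the simulation. Your treatment is, if anything, slightly more explicit about why those background firings are only consequences and never causes of center firings, which the paper leaves as a remark.
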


\subsubsection{$\{0,2,4\}${\bf -FSPP}}

We give a reduction from $\{0,2,3,4\}${\bf -FSPP} to $\{0,2,4\}${\bf -FSPP}, by
replacing each vertex $(u_x,u_y) \in \Z^2$ of $c$ with a macrocell of size $7
\times 7$ whose bottom left corner is at position $(7u_x,7u_y)$. The cell to
macrocell correspondence is the same as the one given on Figure~\ref{fig:0124}
from {\bf FSPP} to $\{0,1,2,4\}${\bf -FSPP}), except that the case $a=1$ is
removed (indeed, remark that macrocells do not make use of value 1). This
reduction can be computed in constant parallel time, {\em i.e.} in $\AC^0$.

The argumentation regarding the correctness of this reduction is analogous to
the case of Proposition~\ref{prop:0124}. Since $\AC^0$ is closed by
composition, Proposition~\ref{prop:0234} gives the result.

\begin{proposition}\label{prop:024}
  {\bf FSPP} $\leq^m_{\AC^0}$ $\{0,2,4\}${\bf -FSPP}.
\end{proposition}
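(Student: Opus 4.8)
The plan is to obtain the result by composition: Proposition~\ref{prop:0234} already gives {\bf FSPP} $\leq^m_{\AC^0}$ $\{0,2,3,4\}${\bf -FSPP}, so it suffices to exhibit a reduction $\{0,2,3,4\}${\bf -FSPP} $\leq^m_{\AC^0}$ $\{0,2,4\}${\bf -FSPP} and then invoke the closure of $\AC^0$ under composition.

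First I would define the reduction by reusing the $7 \times 7$ macrocells of Figure~\ref{fig:0124}. An instance $(c,v)$ of $\{0,2,3,4\}${\bf -FSPP} has cell values in $\{0,2,3,4\}$, so the only cases to cover are $a \in \{0,2,4\}$ (handled by the generic macrocell, whose center carries the value $a$) and $a=3$ (handled by the dedicated macrocell); the case $a=1$ of Figure~\ref{fig:0124} is simply dropped. Crucially, neither family of macrocells places the value $1$ anywhere, and for $a \neq 3$ the center value $a$ itself lies in $\{0,2,4\}$; hence every cell of the produced configuration $c'$ takes a value in $\{0,2,4\}$ and $(c',v')$ is a valid instance of $\{0,2,4\}${\bf -FSPP}. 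The construction is purely local---each macrocell depends only on its cell's value and on whether it is the questioned cell---so it is computable in constant parallel time, \emph{i.e.} in $\AC^0$.

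For correctness I would mirror the argument of Proposition~\ref{prop:0124}. The key observation is that inside each macrocell every value $2$ is neighbor of exactly one value $4$; therefore after a single parallel step all these $2$'s become $3$'s, so that from time $1$ onward the configuration behaves exactly as the $\{0,2,3,4\}$ macrocells used in the reduction of Proposition~\ref{prop:0124}. From there the firing correspondence follows as before: relying on the abelian property of (freezing) sandpiles, any sequence of firings in $c$ is faithfully reproduced, macrocell by macrocell, in $c'$ and conversely, while the background values never fire spuriously and the macrocell of the questioned cell propagates a firing to $v'$ precisely when $v$ fires in $c$.

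The step requiring the most care is this correctness verification, specifically checking that no background $0$ or $2$ cell fires prematurely and that the threshold at the center of each macrocell is correctly encoded for all admissible values $a$. Since these checks are local and entirely analogous to those carried out for Proposition~\ref{prop:0124}, the argument goes through; combined with Proposition~\ref{prop:0234} and the composability of $\AC^0$ reductions, this yields {\bf FSPP} $\leq^m_{\AC^0}$ $\{0,2,4\}${\bf -FSPP}.
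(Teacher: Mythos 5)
Your proposal is correct and follows essentially the same route as the paper: the paper also reduces from $\{0,2,3,4\}$\textbf{-FSPP} (given by Proposition~\ref{prop:0234}) to $\{0,2,4\}$\textbf{-FSPP} by reusing the $7\times 7$ macrocells of Figure~\ref{fig:0124} with the case $a=1$ dropped, noting that these macrocells never use the value $1$, arguing correctness exactly as in Proposition~\ref{prop:0124}, and concluding by closure of $\AC^0$ under composition.
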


\subsection{Perspectives on $\{1,3,4\}${\bf -FSPP} and $\{0,1,3,4\}${\bf -FSPP}}
\label{ss:0134}

Let us first notice that the complexity of predicting both models are
equivalent for $\AC^0$ reductions, with the cell to macrocell correspondence
given on Figure~\ref{fig:134-0134}.

\begin{figure}
  \centerline{
    \begin{tikzpicture}[scale=.5]
  \begin{scope}
    \draw (0,0) rectangle node{$0$} ++ (1,1);
    \node at (2,.5) {$\mapsto$};
    \draw (3,-1) grid (6,2);
    \setcounter{gridcounterx}{3} 
    \setcounter{gridcountery}{-1} 
    \foreach \v in {1,1,1,1,1,1,1,1,1}{
      \node at (\value{gridcounterx}+.5,\value{gridcountery}+.5) {$\v$};
      \addtocounter{gridcounterx}{1}
      \ifthenelse{\value{gridcounterx}>5} 
      {
        \setcounter{gridcounterx}{3} 
        \addtocounter{gridcountery}{1}
      }{}
    }
  \end{scope}
  \begin{scope}[shift={(0,-4)}]
    \draw (0,0) rectangle node{$1$} ++ (1,1);
    \node at (2,.5) {$\mapsto$};
    \draw (3,-1) grid (6,2);
    \setcounter{gridcounterx}{3} 
    \setcounter{gridcountery}{-1} 
    \foreach \v in {1,3,1,3,1,3,1,3,1}{
      \node at (\value{gridcounterx}+.5,\value{gridcountery}+.5) {$\v$};
      \addtocounter{gridcounterx}{1}
      \ifthenelse{\value{gridcounterx}>5} 
      {
        \setcounter{gridcounterx}{3} 
        \addtocounter{gridcountery}{1}
      }{}
    }
  \end{scope}
  \begin{scope}[shift={(7,0)}]
    \draw (0,0) rectangle node{$3$} ++ (1,1);
    \node at (2,.5) {$\mapsto$};
    \draw (3,-1) grid (6,2);
    \setcounter{gridcounterx}{3} 
    \setcounter{gridcountery}{-1} 
    \foreach \v in {1,3,1,3,3,3,1,3,1}{
      \node at (\value{gridcounterx}+.5,\value{gridcountery}+.5) {$\v$};
      \addtocounter{gridcounterx}{1}
      \ifthenelse{\value{gridcounterx}>5} 
      {
        \setcounter{gridcounterx}{3} 
        \addtocounter{gridcountery}{1}
      }{}
    }
  \end{scope}
  \begin{scope}[shift={(7,-4)}]
    \draw (0,0) rectangle node{$3$} ++ (1,1);
    \node at (2,.5) {$\mapsto$};
    \draw (3,-1) grid (6,2);
    \setcounter{gridcounterx}{3} 
    \setcounter{gridcountery}{-1} 
    \foreach \v in {1,3,1,3,4,3,1,3,1}{
      \node at (\value{gridcounterx}+.5,\value{gridcountery}+.5) {$\v$};
      \addtocounter{gridcounterx}{1}
      \ifthenelse{\value{gridcounterx}>5} 
      {
        \setcounter{gridcounterx}{3} 
        \addtocounter{gridcountery}{1}
      }{}
    }
  \end{scope}
\end{tikzpicture}
    \hspace*{1cm}
    \begin{tikzpicture}[scale=.5]
  \fill[black!30] (0,0) rectangle ++ (1,1); 
  \draw (0,0) rectangle node{$0$} ++ (1,1);
  \node at (2,.5) {$\mapsto$};
  \fill[black!30] (4,0) rectangle ++ (1,1); 
  \draw (3,-3) grid (10,4);
  \setcounter{gridcounterx}{3} 
  \setcounter{gridcountery}{-3} 
  \foreach \v in {1,1,1,3,1,1,1,1,1,1,3,1,1,1,1,1,1,3,1,1,1,3,1,3,1,3,3,3,1,4,1,3,1,1,1,1,1,1,3,1,1,1,1,1,1,3,1,1,1}{
    \node at (\value{gridcounterx}+.5,\value{gridcountery}+.5) {$\v$};
    \addtocounter{gridcounterx}{1}
    \ifthenelse{\value{gridcounterx}>9} 
    {
      \setcounter{gridcounterx}{3} 
      \addtocounter{gridcountery}{1}
    }{}
  }
\end{tikzpicture}
  }
  \caption{
    Cell to macrocell correspondence in the reduction from $\{0,1,3,4\}${\bf
    -FSPP} to $\{1,3,4\}${\bf -FSPP}. Left: correspondence when the questioned
    cell is not a $0$, in this case the new questioned cell is in the center of
    the corresponding macrocell. Right: if the questioned cell is a $0$ then we
    inflate all macrocells to be $7 \times 7$, and use the pictured macrocell
    to replace the questioned cell.
  }
  \label{fig:134-0134}
\end{figure}
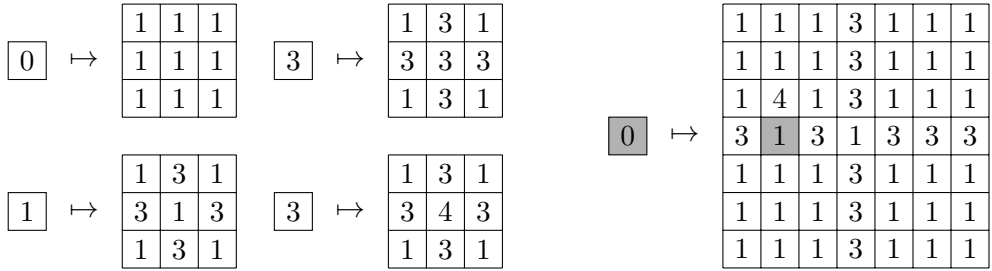

\begin{proposition}\label{prop:134-0134}
  $\{0,1,3,4\}${\bf -FSPP} $\leq^m_{\AC^0}$ $\{1,3,4\}${\bf -FSPP}.
\end{proposition}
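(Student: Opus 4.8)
The plan is to give a cell-to-macrocell reduction, exactly as depicted on Figure~\ref{fig:134-0134}: each cell $(u_x,u_y)$ of the $\{0,1,3,4\}$-simple input $c$ is replaced by a constant-size macrocell of a $\{1,3,4\}$-simple configuration $c'$, the only real constraint being to eliminate the value $0$. The starting observation is that, by Proposition~\ref{prop:bool}, a cell with $c_u=0$ behaves as a four-input \emph{and} gate ($\wedge$): it can fire only after all four of its neighbors have fired and frozen. In the freezing sandpile this means that when such a cell fires it distributes grains exclusively to already frozen neighbors, so its firing has \emph{no} effect on the remaining dynamics. Hence value-$0$ cells are inert with respect to the firing of every other cell, and — except when the questioned cell is itself a $0$ — the reduction only needs to replace them by a macrocell that itself never fires and never perturbs its neighbors.

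I would first fix the macrocell semantics for the values $1,3,4$, given by the $3\times 3$ checkerboards of Figure~\ref{fig:134-0134}: the four edge-midpoint $3$'s act as \emph{or}-wires linking the centers of orthogonally adjacent macrocells, the four corner $1$'s stay inert (their strict-majority threshold of three fired neighbors is never reached, since two of their neighbors are inert corners of adjacent macrocells), and the center carries the value $1$, $3$ or $4$, computing respectively strict majority, \emph{or}, or constant $1$ on the four incoming wire signals. A short dynamical check then yields the intended invariant: the center of the macrocell of $u$ fires, and thereupon broadcasts along all four of its wires, if and only if the original cell $u$ fires — $\vee$ forwarding any single incoming signal, $M$ requiring at least three, and $1$ firing spontaneously. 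For $c_u=0$ the macrocell is instead the all-$1$'s block, whose edges are strict-majority rather than \emph{or}; here one verifies that no edge ever accumulates three fired neighbors, so the block never broadcasts and never injects a spurious signal back into an adjacent macrocell, exactly mirroring the inertness of a value-$0$ cell established above.

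The correctness statement is then the time-correspondence ``cell $u$ of $c$ fires iff the center of its macrocell fires in $c'$''. Since the wires may reach a macrocell at different moments, I would not try to synchronise firings exactly; instead, precisely as in Propositions~\ref{prop:0234} and~\ref{prop:124}, I invoke the abelian property of (freezing) sandpiles, so that any firing sequence of $c$ is reproducible in $c'$ and conversely and the precise schedule becomes irrelevant. It remains to treat the questioned cell. If $c_v\neq 0$ we take the new questioned cell $v'$ to be the center of the macrocell of $v$, and the invariant gives $(c,v)\in \{0,1,3,4\}$-FSPP iff $(c',v')\in \{1,3,4\}$-FSPP. If $c_v=0$ we cannot use an inert block, since we must now \emph{detect} whether all four neighbors of $v$ fire; so I inflate all macrocells to size $7\times 7$ and place at $v$ the dedicated $\{1,3,4\}$ gadget of Figure~\ref{fig:134-0134} (right), whose highlighted cell is designed to fire precisely when its four neighbouring macrocells all broadcast, i.e.\ precisely when $v$ fires. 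Finally the map is in $\AC^0$: the content of each output macrocell depends only on $c_u$ and on the single bit ``$u=v$\,?'', so $c'$ is produced by constant-depth, polynomial-size circuitry.

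The part requiring the most care — and the main obstacle — is the dynamical verification of the two gadgets built purely from strict-majority cells, namely the all-$1$'s block for $c_u=0$ and the $7\times 7$ detector for a questioned $0$. For these one must check, using the order-robustness granted by the abelian property, that corners remain inert, that no edge ever spuriously reaches its threshold under signals arriving from one, two or three sides, and that the detector's output cell fires if and only if all four incoming signals are present. The checkerboard macrocells for $1,3,4$ and the concluding $\AC^0$ bookkeeping are routine by comparison.
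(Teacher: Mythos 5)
Your proposal is correct and takes essentially the same route as the paper, whose own proof consists precisely of the cell-to-macrocell correspondence of Figure~\ref{fig:134-0134}: checkerboard $3\times 3$ blocks with $3$-wires and inert corner $1$'s, an all-$1$'s block for value $0$, the abelian property to dispense with synchronization, and the inflated $7\times 7$ gadget when the questioned cell is a $0$. One small precision: the invariant ``$u$ fires in $c$ iff the center of its macrocell fires in $c'$'' holds only for cells with $c_u\neq 0$ (a $0$-cell may eventually fire while the all-$1$'s block never does), but your own observation that $0$-cells are inert, together with the separate treatment of a questioned $0$, makes this harmless.
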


When trying to find an $\NC$ algorithm to solve $\{1,3,4\}${\bf -FSPP}, our
attempts to adapt the reduction to strict majority employed for $\{0,1,4\}${\bf
-FSPP} in Subsection~\ref{ss:014} failed, because gadgets replacing a value $3$
seem to require degree five, though they can be made planar (see
Figure~\ref{fig:134-smaj}, but the case planar of degree at most five is
left open by~\cite{gmt13,gmmo17}).
Remark that we can answer efficiently in many cases using previous developments:
\begin{itemize}
  \item when the questioned cell is a $1$ in a cycle of $1$, or a $1$ on a path
    whose endpoints are connected to cycles of $1$ (decidable in $\NC^2$), then
    the answer is negative (as is the case for strict majority~\cite{gmt13});
  \item when the questioned cell is a $3$ connected via values $3$ to a $4$
    (decidable in $\NL$), then the answer is positive (as is the case for
    $\{0,3,4\}${\bf -FSPP} in Subsection~\ref{ss:034}).
\end{itemize}
We conjecture that the remaining cases are equivalent to planar {\em and-or}
freezing networks with fan in two and fan out one, but wires are undirected
(this is not a circuit) which leads to difficulties analogous to the general
case of $\{0,1,2,3,4\}${\bf -FSPP}, though interestingly in a seemingly more
restrictive setting.

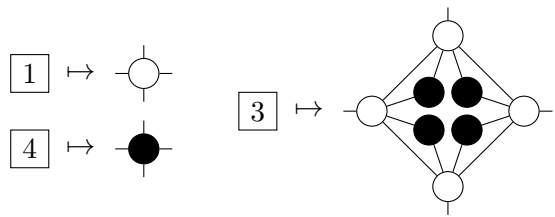
\begin{figure}
  \centerline{\begin{tikzpicture}[scale=.5]
  \tikzstyle{maj} = [draw,circle,inner sep=4pt]
  \draw (0,-.5) rectangle node{$1$} ++ (1,1);
  \node[left] at (2.5,0) {$\mapsto$};
  \node[maj] (n1) at (3.5,0) {};
  \draw (n1) -- ++ (.75,0);
  \draw (n1) -- ++ (-.75,0);
  \draw (n1) -- ++ (0,.75);
  \draw (n1) -- ++ (0,-.75);
  \draw (0,-2.5) rectangle node{$4$} ++ (1,1);
  \node[left] at (2.5,-2) {$\mapsto$};
  \node[maj,fill=black] (n4) at (3.5,-2) {};
  \draw (n4) -- ++ (.75,0);
  \draw (n4) -- ++ (-.75,0);
  \draw (n4) -- ++ (0,.75);
  \draw (n4) -- ++ (0,-.75);
  \draw (6,-1.5) rectangle node{$3$} ++ (1,1);
  \node[left] at (8.5,-1) {$\mapsto$};
  \begin{scope}[shift={(11.5,-1)}]
    \node[maj,fill=black] (in1) at (-.5,.5) {};
    \node[maj,fill=black] (in2) at (.5,.5) {};
    \node[maj,fill=black] (in3) at (.5,-.5) {};
    \node[maj,fill=black] (in4) at (-.5,-.5) {};
    \node[maj] (out1) at (0,2) {};
    \node[maj] (out2) at (2,0) {};
    \node[maj] (out3) at (0,-2) {};
    \node[maj] (out4) at (-2,0) {};
    \draw (out1) -- (in1);
    \draw (out1) -- (in2);
    \draw (out2) -- (in2);
    \draw (out2) -- (in3);
    \draw (out3) -- (in3);
    \draw (out3) -- (in4);
    \draw (out4) -- (in4);
    \draw (out4) -- (in1);
    \draw (out1) -- (out2);
    \draw (out2) -- (out3);
    \draw (out3) -- (out4);
    \draw (out4) -- (out1);
    \draw (out1) -- ++ (0,.75);
    \draw (out2) -- ++ (.75,0);
    \draw (out3) -- ++ (0,-.75);
    \draw (out4) -- ++ (-.75,0);
  \end{scope}
\end{tikzpicture}}
  \caption{
    Reduction from $\{1,3,4\}${\bf -FSPP} to strict majority dynamics on a
    planar undirected graph of degree at most 5, for which the complexity
    of prediction is open.
  }
  \label{fig:134-smaj}
\end{figure}

When trying to prove that {\bf FSPP} reduces to $\{0,1,3,4\}${\bf -FSPP}, we
failed to build a macrocell (with $0,1,3,4$) corresponding to a cell with $2$
sand grains (other elements are straightforward to design), though we found
some close constructions.
For example, the construction illustrated on
Figure~\ref{fig:0134-2ew} behaves almost as a value $2$, except that the
combination of {\em west} plus {\em east} signals does not trigger signals to the
{\em north} and {\em south} (any other combination of at least two signals
triggers signals to the remaining sides). We can
deduce that if the number of values $2$ is upper bounded by a polylogarithmic
function of the input's size, then there is an $\NC^1$ {\em truth-table}
reduction:
\begin{itemize}
  \item for each cell with value $2$ we try the macrocell of
    Figure~\ref{fig:0134-2ew} and the same rotated;
  \item the answer to {\bf FSPP} will be positive if and only if at least one
    combination (truth-table) of such macrocells for all values $2$ gives a
    positive answer;
  \item we need to compute the number $x$ of values $2$ (in $\NC^1$), and then
    $2^x$ transformations in parallel (a polynomial number, each in $\AC^0$).
\end{itemize}
We can also use the planar monotone circuit realizing
threshold function $T^{(4)}_2$ from~\cite[Figures~1 or~2]{c85} in order to create a
macrocell corresponding to a cell with $2$ sand grains (using diodes as on
Figure~\ref{fig:0134-2ew}), but the result given by the last gate is
``trapped'' inside the macrocell (signals are not sent to the remaining sides).
We can nevertheless deduce from such a construction that, if there is only
one value $2$ and if furthermore this is the questioned cell, then we have a
proper $\AC^0$ reduction.

\begin{figure}
  \centerline{\begin{tikzpicture}[scale=.5]
  \draw (-10,-10) grid (11,11);
  \setcounter{gridcounterx}{-10} 
  \setcounter{gridcountery}{-10} 
  \foreach \v in
  { , , , , , , , , , ,3, , , , , , , , , , ,
    , , , , , , ,3,3, ,3, ,3,3, , , , , , , ,
    , , , , ,3,3,1,3,3,3,3,1,3,3, , , , , , ,
    , , , , ,3, ,4, , , , ,4, ,3, , , , , , ,
    , , , , ,3, , , , , , , , ,3, , , , , , ,
    , ,3,3,3,3,3,3,3,3, , , , ,3, , , , , , ,
    , ,3, , , , , , ,3, ,3,3,3,3,3,3,3,3, , ,
    ,3,1,4, , , , , ,3, ,3, , , , , , ,3,3, ,
    ,3,3, , , , , , ,3, ,3, , , , , ,4,1,3, ,
    , ,3, , , , , , ,3,3,1,4, , , , , ,3, , ,
   3,3,3, , , , , , ,3, ,3, , , , , , ,3,3,3,
    , ,3, , , , , ,4,1,3,3, , , , , , ,3, , ,
    ,3,1,4, , , , , ,3, ,3, , , , , , ,3,3, ,
    ,3,3, , , , , , ,3, ,3, , , , , ,4,1,3, ,
    , ,3,3,3,3,3,3,3,3, ,3, , , , , , ,3, , ,
    , , , , , ,3, , , , ,3,3,3,3,3,3,3,3, , ,
    , , , , , ,3, , , , , , , , ,3, , , , , ,
    , , , , , ,3, ,4, , , , ,4, ,3, , , , , ,
    , , , , , ,3,3,1,3,3,3,3,1,3,3, , , , , ,
    , , , , , , ,3,3, ,3, ,3,3, , , , , , , ,
    , , , , , , , , , ,3, , , , , , , , , , ,
  }{
    \node at (\value{gridcounterx}+.5,\value{gridcountery}+.5) {$\v$};
    \addtocounter{gridcounterx}{1}
    \ifthenelse{\value{gridcounterx}>10} 
    {
      \setcounter{gridcounterx}{-10} 
      \addtocounter{gridcountery}{1}
    }{}
  }
  \draw[very thick] (-3,10) -- ++ (2,0) -- ++ (0,-1) -- ++ (1,0) -- ++ (0,-1) -- ++ (-1,0) -- ++ (0,-1) -- ++ (-1,0) -- ++ (0,1) -- ++ (-2,0) -- ++ (0,1) -- ++ (1,0) -- cycle;
  \draw[very thick,-latex] (-3,10.5) -- ++ (2,0);
  \draw[very thick] (2,10) -- ++ (2,0) -- ++ (0,-1) -- ++ (1,0) -- ++ (0,-1) -- ++ (-1,0) -- ++ (0,-1) -- ++ (-1,0) -- ++ (0,1) -- ++ (-2,0) -- ++ (0,1) -- ++ (1,0) -- cycle;
  \draw[very thick,-latex] (2,10.5) -- ++ (2,0);
  \draw[very thick] (8,5) -- ++ (1,0) -- ++ (0,-1) -- ++ (1,0) -- ++ (0,-2) -- ++ (-1,0) -- ++ (0,-1) -- ++ (-1,0) -- ++ (0,2) -- ++ (-1,0) -- ++ (0,1) -- ++ (1,0) -- cycle;
  \draw[very thick,-latex] (10.5,2) -- ++ (0,2);
  \draw[very thick] (8,0) -- ++ (1,0) -- ++ (0,-1) -- ++ (1,0) -- ++ (0,-2) -- ++ (-1,0) -- ++ (0,-1) -- ++ (-1,0) -- ++ (0,2) -- ++ (-1,0) -- ++ (0,1) -- ++ (1,0) -- cycle;
  \draw[very thick,-latex] (10.5,-3) -- ++ (0,2);
  \draw[very thick] (2,-6) -- ++ (1,0) -- ++ (0,-1) -- ++ (2,0) -- ++ (0,-1) -- ++ (-1,0) -- ++ (0,-1) -- ++ (-2,0) -- ++ (0,1) -- ++ (-1,0) -- ++ (0,1) -- ++ (1,0) -- cycle;
  \draw[very thick,-latex] (4,-9.5) -- ++ (-2,0);
  \draw[very thick] (-3,-6) -- ++ (1,0) -- ++ (0,-1) -- ++ (2,0) -- ++ (0,-1) -- ++ (-1,0) -- ++ (0,-1) -- ++ (-2,0) -- ++ (0,1) -- ++ (-1,0) -- ++ (0,1) -- ++ (1,0) -- cycle;
  \draw[very thick,-latex] (-1,-9.5) -- ++ (-2,0);
  \draw[very thick] (-8,0) -- ++ (1,0) -- ++ (0,-2) -- ++ (1,0) -- ++ (0,-1) -- ++ (-1,0) -- ++ (0,-1) -- ++ (-1,0) -- ++ (0,1) -- ++ (-1,0) -- ++ (0,2) -- ++ (1,0) -- cycle;
  \draw[very thick,-latex] (-9.5,-1) -- ++ (0,-2);
  \draw[very thick] (-8,5) -- ++ (1,0) -- ++ (0,-2) -- ++ (1,0) -- ++ (0,-1) -- ++ (-1,0) -- ++ (0,-1) -- ++ (-1,0) -- ++ (0,1) -- ++ (-1,0) -- ++ (0,2) -- ++ (1,0) -- cycle;
  \draw[very thick,-latex] (-9.5,4) -- ++ (0,-2);
\end{tikzpicture}}
  \caption{
    White cells have no sand grain ($0$), and diode mechanisms are highlighted.
    Macrocell with $0,1,3,4$ corresponding to a value $2$, except that
    the {\em west} plus {\em east} signals do not trigger signals to
    the {\em north} and {\em south} sides.
    Any other combination of at least two signals triggers signals to
    the remaining sides.
    The same macrocell rotated by 90 degrees is only missing the {\em north}
    plus {\em south} combination.
  }
  \label{fig:0134-2ew}
\end{figure}
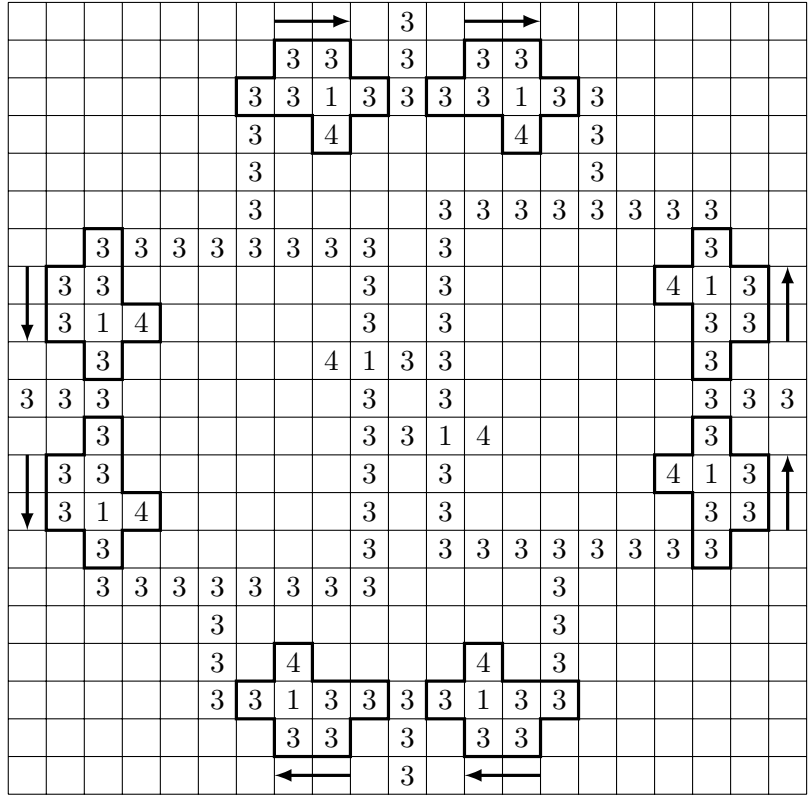

\section{Conclusion}

The freezing world allows to make insightful progresses related to difficult questions.
Exploiting the formal connections with threshold Boolean functions established
by Proposition~\ref{prop:bool}, Theorems~\ref{theorem:nc}
and~\ref{theorem:fspp} characterise the computational complexity of all but two
restrictions of {\bf freezing sandpile prediction problem} ({\bf FSPP}):
either the problem is as hard as unrestricted {\bf FSPP};
or it is proven to be in $\NC$ (or below).
The results are displayed on Table~\ref{table:results}.

\begin{table}
  \centerline{
    \def\arraystretch{1.3}
    \begin{tabular}{|c|c|c||c||c|}
      \hline
      $\AC^0$ & $\NL$ & $\NC^2$ & {\bf FSPP}$\leq^m_{\AC^0}$ & Open\\
      \hline
      $\{0,4\}$ & $\{0,3,4\}$ & $\{1,4\}$ & $\{1,2,3,4\}$ & $\{0,1,3,4\}$\\[-4pt]
      & & $\{0,1,4\}$ & $\{0,2,3,4\}$ & $\{1,3,4\}$\\[-4pt]
      & & $\{2,4\}$ & $\{0,1,2,4\}$ & \\[-4pt]
      & & $\{2,3,4\}$ & $\{1,2,4\}$& \\[-4pt]
      & & & $\{0,2,4\}$ & \\
      \hline
    \end{tabular}
  }
  \caption{Summary of Theorems~\ref{theorem:nc},~\ref{theorem:fspp} and
  Open question~\ref{question:0134}.}
  \label{table:results}
\end{table}

The results show interesting fine-grained view on necessary and sufficient
conjunctions of elements (values among $\{0,1,2,3,4\}$) for the dynamics to be
``as expressive as'' {\bf FSPP}. We propose three remarks.

First, in~\cite{gmt13} it is proven that the prediction problem on the non-strict
majority cellular automata is $\Poly$-hard for the family of graphs with
maximum degree at least $4$, and that it is in $\NC$ for the family of
graphs with maximum degree at most $3$. In~\cite{gmmo17} it is proven that
the same problem is in $\NC$ when the graph restricted to the two
dimensional grid with von Neumann neighborhood (a particular case of
regular graph where each vertex has degree $4$), which corresponds to
$\{2,4\}${\bf -FSPP}. According to Section~\ref{s:bool}, the problem
$\{0,2,4\}${\bf -FSPP} introduces a new refinement: when restricted to the
two dimensional grid with von Neumann neighborhood on which some vertices
are somehow {\em removed} (with sand value 0), the problem becomes as hard
as {\bf FSPP}.

Second, in the reduction of Proposition~\ref{prop:0124} (and some subsequent ones) it
seems important to have many values $4$. What is the computational complexity
of the weak prediction problem (given a finite {\em stable} configuration,
plus only one sand grain addition, namely {\bf $1^\text{st}$-col-S-PRED}
of the survey~\cite{fp19}) in this case?

Third, the Open question~\ref{question:0134} puts in light a surprisingly complex
refinement, where forbidding only the value $2$ seems to decrease the
expressiveness of the model, yet not flattening it to another known case. Could
it be that, if $\NC \neq \Poly$, then {\bf FSPP} and $\{0,1,3,4\}${\bf -FSPP}
would belong to different intermediate classes strictly between $\NC$ and
$\Poly$ (which would exist according to an analog of Ladner's theorem~\cite{v90})?

The present work circumvents the question of whether
{\bf FSPP} itself is in $\NC$ or $\Poly$-hard. One can implement conjunctions,
disjunctions, but the relationship between the impossibility of crossing
wires~\cite{gg06} and the possibility of using undirected wires, or even other
forms of signal implementation, leaves open its reduction to MPCVP~\cite{y91}
(prediction in $\NC$), or the possibility to implement non-planar or
non-monotone gates~\cite{l75} ($\Poly$-hard prediction). The general case of
{\bf FSPP} reduces to $\{0,2,4\}${\bf -FSPP}, could that help in order to find
an efficient algorithm?
Advances on {\bf FSPP} would constitute great insights for the classical
sandpile prediction problem ({\bf SPP}) in two dimensions, left open in the
original paper by Moore and Nilsson~\cite{mn99}, even though some relationship
between {\bf FSPP} and {\bf SPP} is still to be formally established.

Finally, the relationship between threshold functions and cell's sand content
opens perspectives on the prediction of Boolean functions on the grid: in the
freezing and non-freezing worlds, what are the necessary and sufficient elements
in order to have easy/hard prediction problems?

\section*{Acknowledgments}

This research was partially supported by
ANID via PAI + Convocatoria Nacional Subvenci\'on a la Incorporaci\'on en la Academia A\~no 2017 + PAI77170068 (P.M.),
FONDECYT 11190482 (P.M.),
FONDECYT 1200006 (E.G., P.M.),
STIC- AmSud CoDANet project 88881.197456/2018-01 (E.G., P.M., K.P.),
ANR-18-CE40-0002 FANs (K.P.).

\bibliographystyle{plain}
\bibliography{biblio}

\end{document}